\documentclass[11pt]{article}
\usepackage[utf8]{inputenc}
\usepackage{latexsym,amsfonts,amsmath,amsthm,amssymb, epsfig}
\usepackage[a4paper,total={170mm,250mm},vcentering,dvips]{geometry}
\usepackage{hyperref,color}
\usepackage{wrapfig}

\usepackage{nicematrix}
\usepackage{arydshln,nccrules}

\newtheorem{thm}{Theorem}[section]

\newtheorem{lemma}[thm]{Lemma}

\newtheorem{dfn}[thm]{Definition}
\newtheorem{proposition}[thm]{Proposition}

\newtheorem{rem}[thm]{Remark}

\DeclareMathOperator{\ReLU}{ReLU}
\DeclareMathOperator{\sign}{sign}
\DeclareMathOperator{\supp}{supp}
\newcommand{\CalC}{\mathcal C}
\newcommand{\CalS}{\mathcal S}

\newcommand{\R}{\mathbb{R}}

\newcommand{\N}{\mathbb{N}}

\newcommand{\Z}{\mathbb{Z}}

\usepackage{trimclip}
\newlength{\trianglewidth}
\newlength{\pluswidth}
\settowidth{\trianglewidth}{\(\vartriangleleft\)}
\settowidth{\pluswidth}{\(+\)}

\newcommand{\righttriangleplus}{%
    \mathrel{\makebox[\trianglewidth]{%
        \raisebox{.3\height}{\clipbox{.3\width{} .3\height}{\(+\)}}%
        \hspace*{-.333\pluswidth}%
        \makebox[\trianglewidth]{\(>\)}%
    }}%
}

\begin{document}

\title{A multivariate Riesz basis of ReLU neural networks}
\author{Cornelia Schneider\footnote{Friedrich-Alexander Universit\"at Erlangen, Applied Mathematics III, Cauerstr. 11, 91058 Erlangen, Germany. Email: \href{mailto:cornelia.schneider@math.fau.de}{cornelia.schneider@math.fau.de}}
\ 
and Jan Vyb\'iral\footnote{Department of Mathematics, Faculty of Nuclear Sciences and Physical Engineering,
Czech Technical University, Trojanova 13, 12000 Praha, Czech Republic. Email: \href{mailto:jan.vybiral@fjfi.cvut.cz}{jan.vybiral@fjfi.cvut.cz}.
The work of this author has been supported by the grant P202/23/04720S of the Grant Agency of the Czech Republic}}
\maketitle
\begin{abstract}
We consider the trigonometric-like system of piecewise linear functions introduced recently by Daubechies, DeVore, Foucart, Hanin, and Petrova.
We provide an alternative proof that this system forms a Riesz basis of $L_2([0,1])$ based on the Gershgorin theorem. We also generalize
this system to higher dimensions $d>1$ by a construction, which avoids using (tensor) products. As a consequence, the functions from the new Riesz basis
of $L_2([0,1]^d)$ can be easily represented by neural networks. Moreover, the Riesz constants of this system are independent of $d$, making it an attractive
building block regarding  future multivariate analysis of neural networks.\\
\noindent{\em Key Words:} Riesz basis, Rectified Linear Unit (ReLU), artificial neural networks, Euler product, M\"obius function\\
{\em MSC2020 Math Subject Classifications:} 68T07, 42C15, 11A25.
\end{abstract}

\section{Introduction}

The last decades observed a tremendous success of artificial neural networks in many machine learning
tasks, including computer vision \cite{Intro3}, speech recognition \cite{Intro4},
natural language processing \cite{IntroA}, or games solutions \cite{Intro7, Intro6} to name just few.
Despite their wide use, many of their properties are not fully understood
and many aspects of their great practical performance lack a rigorous explanation. Without any doubt,
a deeper insight into the theory of artificial neural networks could boost their applicability even further.

In the last decade, a growing number of authors investigated why the deep neural networks with a higher number
of hidden layers approximate many interesting functions more efficiently than the shallow neural networks
(with only one hidden layer) using the same number of parameters. We refer to \cite{BGKP,DDFHP, DeVore,EPGB,GKNV,PV,Telg,Yarot} 
for a number of mathematically rigorous results in this direction.

One of the astonishing properties of artificial neural networks is, that they can approximate
extremely well also functions of many variables, which often allows to avoid the curse of dimensionality \cite{Jentzen,DN,Poggio}.
The aim of this work is to shed new light on the effectiveness of artificial neural networks
for the approximation of multivariate functions by constructing a new system of functions,
which forms a Riesz basis of $L_2([0,1]^d)$ for every $d\ge 1$.

To state the result, we first recall the notion of a Riesz basis (which, in turn, 
is a generalization of an orthonormal system and of an orthonormal basis, cf. \cite{Ole}).
\begin{dfn}\label{dfn:Riesz}
Let $H$ be a real Hilbert space. The (finite or infinite) sequence $(x_n)_n\subset H$ is called a Riesz sequence if there are two constants $A,B>0$ such that
\begin{equation}\label{eq:Riesz}
A\sum_n\alpha_n^2\le\left\|\sum_n \alpha_nx_n\right\|^2\le B\sum_n\alpha_n^2
\end{equation}
for every real square summable sequence $(\alpha_n)_n$. If the closed span of $(x_n)_n$ is the whole space $H$, then we call it a Riesz basis.
\end{dfn}

The system, which we study in this paper, is a trigonometric-like basis, where instead of $\cos$ and $\sin$ functions
we use their piecewise linear counterparts $\CalC$ and $\CalS$, which are defined as follows (cf. Figure \ref{fig:1}).
\begin{dfn}\begin{enumerate}
\item For $x\in[0,1]$, we define
\[
{\CalC} (x)=4\left|x-\frac{1}{2}\right|-1
=\begin{cases}1-4x,\ x\in[0,1/2),\\4x-3,\ x\in[1/2,1]\end{cases}
\]
and
\[
{\CalS}(x)=\left|2-4\left|x-\frac{1}{4}\right|\right|-1=\begin{cases}
4x,\ &x\in[0,1/4),\\
2-4x,\ &x\in[1/4,3/4),\\
4x-4, &x\in [3/4,1].
\end{cases}
\]
\item For $x\in\R$, we extend this definition periodically, i.e. $\CalC(x)=\CalC(x-\lfloor x\rfloor)$ and $\CalS(x)=\CalS(x-\lfloor x\rfloor).$
\item If $k\ge 1$ and $x\in\R$, we put $\CalC_k(x)=\CalC(kx)$ and $\CalS_k(x)=\CalS(kx).$
\end{enumerate}
\end{dfn}

\begin{figure}[h]
\begin{minipage}{0.45\textwidth}
\includegraphics[width=9cm]{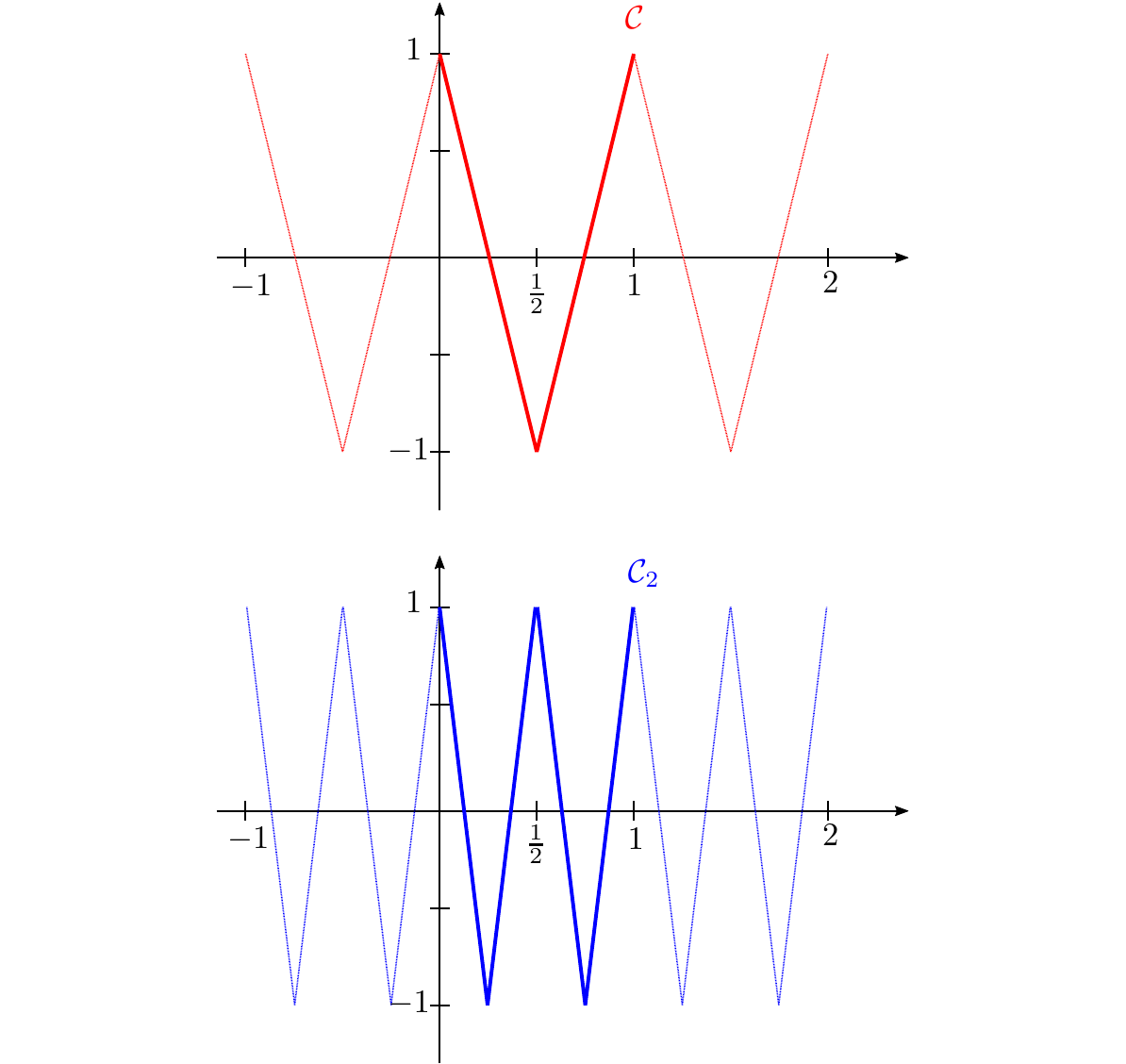}
\end{minipage}\hfill 
\begin{minipage}{0.45\textwidth}
\includegraphics[width=9cm]{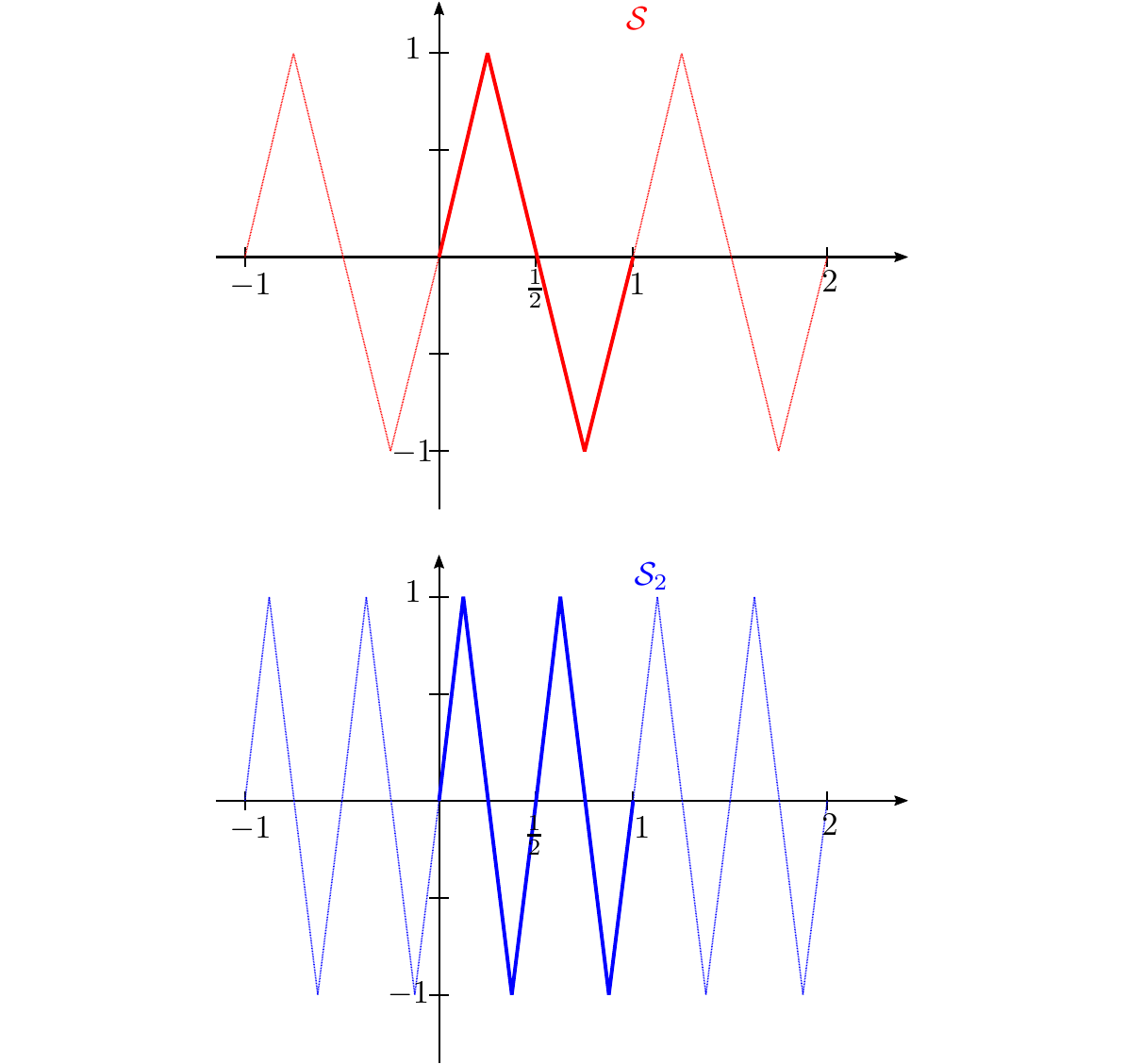}
\end{minipage}
\caption{The plot of $\CalC$, $\CalS$, $\CalC_2$ and $\CalS_2$.}
\label{fig:1}
\end{figure}

These functions were introduced and studied in \cite{DDFHP}, where it was shown that the system
\begin{equation}\label{eq:R1}
{\overline{\mathcal R}}_1:=\{1\}\cup\{\sqrt{3}\,\CalC_k,\sqrt{3}\,\CalS_k:k\in\N\}
\end{equation}
forms a Riesz basis of $L_2([0,1])$ with the constants $A=1/2$ and $B=3/2.$
Let us note that the factor $\sqrt{3}$ 
in \eqref{eq:R1} is simply a normalization factor, which ensures that all the elements of ${\overline{\mathcal R}}_1$
have unit norm in $L_2([0,1])$. The proof given in \cite{DDFHP} is implicitly inspired by  the method
of analysis  and synthesis operators, respectively,  used in the frame theory, cf. \cite{Charlie}.
It is one of the aims of our paper (cf. Theorem \ref{thm:CSd1}) to provide an alternative proof, which first reduces \eqref{eq:Riesz}
to the study of spectral properties of the Gram matrix of ${\overline{\mathcal R}}_1$. The result then follows from
the Gershgorin circle theorem and some elementary number theory (including Euler products and a certain Ramanujan's formula).

The main advantage of \eqref{eq:R1} in contrast to the standard trigonometric system is, that its elements
can be easily identified by artificial neural networks with the REctified Linear Unit (ReLU) activation function.\\

\begin{wrapfigure}{r}{0.35\textwidth}
\includegraphics[width=5cm]{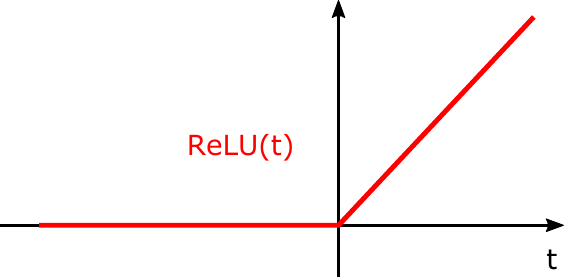}
\caption{{\small Graph of the $\ReLU$ function.}}
\label{fig:ReLU}
\end{wrapfigure}
Let us recall, that if $t\in\R$, then the ReLU function is defined as $\ReLU(t)=\max(0,t)$, cf. Figure \ref{fig:ReLU}. On vectors, it acts component-wise
\[
\ReLU(x_1,\dots,x_n)=(\ReLU(x_1),\dots,\ReLU(x_n)),\quad x\in \R^n.
\]

If $W\ge 2$ and $L\ge 1$ are integer parameters, then we denote by $\Upsilon^{W,L}$ the real-valued functions
 which can be represented by a $\ReLU$ neural network of width $W$ and depth $L$,
see Definition \ref{dfn:NN} for a precise formulation. Then \cite[Theorem 6.2]{DDFHP}
shows that $\CalC_j$ and $\CalS_j$ (restricted to $[0,1]$) lie in $\Upsilon^{W,L}$ for an arbitrary $W\ge 6$
and $L$ of the asymptotic order $\log_2(j).$

The main aim of our work is to generalize the results of \cite{DDFHP} to the multivariate case.
There are two crucial issues which prevent us from simply taking the tensor products of the functions in 
${\overline{\mathcal R}}_1$. First, the product function $(x,y)\to x\cdot y$ can only be approximated by the $\ReLU$ neural networks
and, second, the ratio of the Riesz constants $B$ and $A$ gets exponentially large when $d$ grows.

We propose a surprisingly simple and effective solution to these challenges.
We show that (cf. Theorem \ref{thm:CSd}) the multivariate analogue of ${\overline{\mathcal R}}_1$
\begin{equation}\label{eq:Rd}
\{1\}\cup\{\sqrt{3}\,{\mathcal C}(\alpha\cdot x):\alpha\righttriangleplus 0\}
\cup\{\sqrt{3}\,\CalS(\alpha\cdot x):\alpha\righttriangleplus 0\},
\end{equation}
forms a Riesz basis of $L_2([0,1]^d)$ for every $d\ge 1$ with the same constants $A=1/2$ and $B=3/2$
as in the univariate case. Here, $\alpha\righttriangleplus 0$
means that the first non-zero entry of $\alpha=(\alpha_1,\dots,\alpha_d)\in\Z^d$ is positive.
Finally, we show in Section \ref{sec:NN} that the functions from \eqref{eq:Rd} can be exactly reproduced
by the $\ReLU$ neural networks of width $W$ and depth $L$ in essentially the same way as in the univariate case,
i.e., there is virtually no price to pay when $d$ grows.

\section{Univariate case}

It was observed in \cite[Section 6]{DDFHP}, that the system of piecewise linear functions
\begin{equation}\label{eq:Daub1}
\{\CalC_k,\CalS_k:k\in\N\}, 
\end{equation}
 on one hand shares some nice properties with the trigonometric system and on the other hand  can be easily reproduced by artificial neural networks with the $\ReLU$ activation function.
The aim of this section is to essentially reprove Proposition 6.1 of \cite{DDFHP}, which states that this system
is a Riesz basis of $L_2^0([0,1])$, the space of square integrable functions with mean zero.

Although our proof shares some technical details with \cite{DDFHP}, its main structure is different: In particular, 
we reduce the problem to spectral properties of the corresponding Gram matrix and then apply the Gershgorin circle theorem.
Interestingly, by using some elementary number theory, we are able to  completely characterize the inner products of the $\CalC_i$  and/or  $\CalS_j$  functions.
In contrast to \cite{DDFHP}, we complement \eqref{eq:Daub1} by adding the constant function,
which is orthogonal to all functions from \eqref{eq:Daub1}.

We denote by $\gcd(i,j)$ the greatest common divisor of $i$ and $j$ and by $\langle f,g\rangle=\int_0^1 f(t)g(t)dt$
the standard inner product in $L_2([0,1]).$

\begin{lemma}\label{lem:inner_productC} Let $i,j\in\N.$ Then
\begin{enumerate}
\item $\langle \CalC_i,\CalS_j\rangle=0$;
\item $\langle {\CalC}_i,{\CalC}_j\rangle=\langle\CalS_i,\CalS_j\rangle=0$ if $i/\gcd(i,j)$ is odd and $j/\gcd(i,j)$ is even (or vice versa), i.e.,
if the prime factorizations of $i$ and $j$ contain a different power of 2;
\item If $i/\gcd(i,j)$ and $j/\gcd(i,j)$ are both odd, then
\[
3\cdot \langle {\CalC}_i,{\CalC}_j\rangle=3\cdot |\langle {\CalS}_i,{\CalS}_j\rangle|
=\frac{\gcd(i,j)^4}{i^2\cdot j^2}.
\]
Here, the sign of $\langle {\CalS}_i,{\CalS}_j\rangle$ is negative if, and only if, $(i+j)/(2\gcd(i,j))$ is even.
\item In particular, we get $\langle {\CalC}_i,{\CalC}_i\rangle=\langle \CalS_i,\CalS_i\rangle=1/3$ for all $i\in \N$.
\end{enumerate}
\end{lemma}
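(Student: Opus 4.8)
The plan is to pass to the Fourier expansions of $\CalC$ and $\CalS$ with respect to the classical trigonometric system on $[0,1]$ and then exploit orthogonality. Since $\CalC$ is the even triangular wave of period $1$ with mean zero, a routine computation of its cosine coefficients (or the known triangle-wave series, cross-checked at $x=0$ against $\sum_{n\ \mathrm{odd}}n^{-2}=\pi^2/8$) gives
\[
\CalC(x)=\frac{8}{\pi^2}\sum_{\substack{n\ge 1\\ n\ \mathrm{odd}}}\frac{\cos(2\pi n x)}{n^2}.
\]
Because $\CalS(x)=\CalC(x-\tfrac14)$, a fact I would verify by comparing the two piecewise-linear formulas at the breakpoints $0,\tfrac14,\tfrac12,\tfrac34,1$, the shift turns $\cos(2\pi nx-\pi n/2)=(-1)^{(n-1)/2}\sin(2\pi nx)$ for odd $n$, so that
\[
\CalS(x)=\frac{8}{\pi^2}\sum_{\substack{n\ge 1\\ n\ \mathrm{odd}}}\frac{(-1)^{(n-1)/2}}{n^2}\sin(2\pi n x).
\]
Both series converge absolutely and uniformly (coefficients decay like $n^{-2}$), which legitimises the term-by-term integration used below; replacing $x$ by $ix$ shows that $\CalC_i$ and $\CalS_i$ have the same expansions with every frequency $n$ rescaled to $ni$.

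Item (1) is then immediate: $\CalC_i$ is a pure cosine series and $\CalS_j$ a pure sine series, and $\int_0^1\cos(2\pi px)\sin(2\pi qx)\,dx=0$ for all integers $p,q$, whence $\langle\CalC_i,\CalS_j\rangle=0$. (Alternatively, $\CalC$ is even and $\CalS$ odd about the origin, so $\CalC_i\CalS_j$ is an odd $1$-periodic function and its integral over $[0,1]$ vanishes.) For items (2) and (3) I would insert the two expansions into $\langle\CalC_i,\CalC_j\rangle$ and use $\int_0^1\cos(2\pi p x)\cos(2\pi q x)\,dx=\tfrac12$ iff $p=q$ (and $0$ otherwise) for positive integers $p,q$. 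The only surviving terms are those indexed by odd $m,n$ with $mi=nj$, giving
\[
\langle\CalC_i,\CalC_j\rangle=\frac{32}{\pi^4}\sum_{\substack{m,n\ \mathrm{odd}\\ mi=nj}}\frac{1}{m^2n^2}.
\]

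The arithmetic enters when solving $mi=nj$. Writing $g=\gcd(i,j)$, $i=ga$, $j=gb$ with $\gcd(a,b)=1$, the equation $ma=nb$ forces $m=bt$, $n=at$ for a positive integer $t$. The oddness constraints split into two cases. If $a$ and $b$ have different parities (equivalently, $i$ and $j$ carry different powers of $2$), then one of $at,bt$ is always even, no admissible $t$ exists, and the sum is empty, proving (2). If $a$ and $b$ are both odd, then $m,n$ are odd exactly when $t$ is odd, and
\[
\langle\CalC_i,\CalC_j\rangle=\frac{32}{\pi^4 a^2b^2}\sum_{t\ \mathrm{odd}}\frac{1}{t^4}=\frac{32}{\pi^4 a^2b^2}\cdot\frac{\pi^4}{96}=\frac{1}{3a^2b^2}=\frac{g^4}{3\,i^2j^2},
\]
using $\sum_{t\ \mathrm{odd}}t^{-4}=(1-2^{-4})\zeta(4)=\pi^4/96$. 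This is exactly the claimed value, and (4) is the special case $i=j$ (then $a=b=1$), giving $1/3$.

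The sine case runs identically except for the sign factor $(-1)^{(m-1)/2+(n-1)/2}$ produced by the two sine coefficients. With $m=bt$, $n=at$ and $a,b,t$ all odd, this exponent equals $t(a+b)/2-1$; since $t$ is odd and $(a+b)/2$ is an integer, it has the same parity as $(a+b)/2-1$, so the sign is the constant $(-1)^{(a+b)/2+1}$, independent of $t$. Pulling it out of the sum yields $\langle\CalS_i,\CalS_j\rangle=(-1)^{(a+b)/2+1}/(3a^2b^2)$, hence $3\,|\langle\CalS_i,\CalS_j\rangle|=g^4/(i^2j^2)$ and the sign is negative precisely when $(a+b)/2=(i+j)/(2\gcd(i,j))$ is even, as asserted. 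I expect the delicate points to be the reduction of the Diophantine condition $mi=nj$ through the $\gcd$ (which is what manufactures the factor $\gcd(i,j)^4$) and the bookkeeping of this sign; the analytic input, namely the term-by-term integration and the single evaluation $\sum_{t\ \mathrm{odd}}t^{-4}=\pi^4/96$, is routine.
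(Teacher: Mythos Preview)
Your proof is correct and follows essentially the same route as the paper's: both expand $\CalC$ and $\CalS$ into their Fourier series (the paper writes this as $\sqrt{3}\,\CalC_k=\mu\sum_{m\ge 0}(2m+1)^{-2}c_{(2m+1)k}$ with $\mu^2=96/\pi^4$, which is your $8/\pi^2$ after unpacking the normalizations), reduce the inner product to the Diophantine condition $mi=nj$ with $m,n$ odd, parametrize the solutions via $g=\gcd(i,j)$, and evaluate the resulting sum using $\sum_{t\ \mathrm{odd}}t^{-4}=\pi^4/96$. Your sign bookkeeping for $\langle\CalS_i,\CalS_j\rangle$ is likewise equivalent to the paper's computation that $n+m=\tfrac{i+j}{2g}(2l+1)-1$.
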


\begin{proof}
We transfer the proof to the Fourier side by exploiting the decomposition of $\CalC_i$ and $\CalS_j$ into Fourier series, cf. \cite[page 166]{DDFHP}. Let
\[
c_k(x)=\sqrt{2}\cos(2\pi kx),\quad s_k(x)=\sqrt{2}\sin(2\pi kx),\quad k\in\N_0,\ x\in\R.
\]
Then a standard calculation reveals that
\begin{equation}\label{eq:CSFourier}
\sqrt{3}\,{\mathcal C}_k=\mu\sum_{m\ge 0}\frac{1}{(2m+1)^2}c_{(2m+1)k}\quad\text{and}\quad 
\sqrt{3}\,{\mathcal S}_k=\mu\sum_{m\ge 0}\frac{(-1)^m}{(2m+1)^2}s_{(2m+1)k},
\end{equation}
where
\begin{equation}\label{eq:mu}
\mu^2\sum_{m\ge 0}\frac{1}{(2m+1)^4}=1, \quad \text{i.e.}\quad \mu^2\frac{\pi^4}{96}=1.
\end{equation}
Using \eqref{eq:CSFourier}, we immediately obtain that $\langle\CalC_i,\CalS_j\rangle=0$.
Furthermore,
\begin{align}
3\langle {{\mathcal C}_i},{{\mathcal C}_j}\rangle&=\sum_{m,n=0}^\infty \frac{\mu^2}{(2m+1)^2(2n+1)^2}\langle c_{(2m+1)i},c_{(2n+1)j}\rangle \notag \\
&=\sum_{m,n=0}^\infty \frac{\mu^2}{(2m+1)^2(2n+1)^2}\delta_{(2m+1)i,(2n+1)j}, \label{prodC_iC_j}
\end{align}
where $\delta_{u,v}=1$ if $u=v$ and zero otherwise.
To simplify \eqref{prodC_iC_j}, we have to find for fixed $i,j\in\N$ all $m,n\in\N_0$ such that $(2m+1)i=(2n+1)j$.
First, we observe that if the prime factorizations of $i$ and $j$ contain a different power of two,
then also $(2m+1)i$ and $(2n+1)j$ have a different power of two in their prime factorizations and therefore they differ for all $m,n\in\N_0$. Consequently, \eqref{prodC_iC_j} shows that $3\langle {{\mathcal C}_i},{{\mathcal C}_j}\rangle=0$.

If the prime factorizations of $i$ and $j$ contain the same power of two, then $i/\gcd(i,j)$ and $j/\gcd(i,j)$ are both odd.
We denote $g=\gcd(i,j)$ and note that $i/g$ and $j/g$ are coprime, i.e., that their greatest common divisor is one. 
We then look for all pairs $(m,n)\in\N^2_0$, which solve the equation
\[
(2m+1)\cdot g\cdot\frac{i}{g}=(2n+1)\cdot g\cdot\frac{j}{g}.
\]
All the solutions are obtained in the form
\begin{equation}\label{eq:ml}
2m+1=\frac{j}{g}\cdot(2l+1),\quad 2n+1=\frac{i}{g}\cdot (2l+1),\quad l\in\N_0.
\end{equation}
We insert \eqref{eq:ml} into \eqref{prodC_iC_j} and conclude that 
\begin{align*}
3\langle {{\CalC}_i},{{\CalC}_j}\rangle&=\sum_{l=0}^\infty \frac{\mu^2}{\displaystyle \left(\frac{i}{g}(2l+1)\right)^2\cdot \left(\frac{j}{g}(2l+1)\right)^2}\\
&=\mu^2\sum_{l=0}^\infty \frac{1}{(2l+1)^4}\cdot \frac{1}{(i/g)^2\cdot (j/g)^2}= \frac{1}{(i/g)^2\cdot (j/g)^2}.
\end{align*}

The calculation of $\langle\CalS_i,\CalS_j\rangle$ can be performed in a very similar way, one only needs to take care about the sign of the inner product. In particular, instead of \eqref{prodC_iC_j} we obtain 
\[
3\langle {{\mathcal S}_i},{{\mathcal S}_j}\rangle
=\sum_{m,n=0}^\infty \frac{\mu^2(-1)^{n+m}}{(2m+1)^2(2n+1)^2}\delta_{(2m+1)i,(2n+1)j}, 
\]
where $(-1)^{n+m}=-1$ if, and only if, $n+m=\frac{i+j}{2g}(2l+1)-1$ is odd. Therefore, the sign is negative if, and only if,  $\frac{i+j}{2g}$ even. 
\end{proof}

Next, we combine Lemma \ref{lem:inner_productC} with the Gershgorin circle theorem
and provide an alternative proof of \cite[Theorem 6.2]{DDFHP},
which we restate as follows.

\begin{thm}\label{thm:CSd1}
The system ${\mathcal R}_1:=\{1\}\cup\{\CalC_k,\CalS_k:k\in\N\}$ is a Riesz basis of $L_2([0,1])$.
\end{thm}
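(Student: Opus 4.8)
The plan is to rephrase the Riesz condition \eqref{eq:Riesz} as a two-sided bound on the quadratic form associated with the Gram matrix $G$ of $\mathcal{R}_1$, and then to estimate this quadratic form by a Gershgorin-type (diagonal dominance) argument whose off-diagonal row sums are evaluated through the number-theoretic formulas of Lemma \ref{lem:inner_productC}. First I would record that $\|\sum_n\alpha_n x_n\|^2=\sum_{m,n}\alpha_m\alpha_n\langle x_m,x_n\rangle=\langle G\alpha,\alpha\rangle$, so \eqref{eq:Riesz} holds with constants $A,B$ precisely when the quadratic form $\alpha\mapsto\langle G\alpha,\alpha\rangle$ takes values in $[A,B]$ on unit vectors. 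Using part (1) of Lemma \ref{lem:inner_productC} together with the fact that $\CalC_k,\CalS_k$ have mean zero (hence are orthogonal to the constant $1$), the matrix $G$ decouples into three mutually orthogonal blocks: a $1\times1$ block for the constant (value $1$), a ``cosine'' block $G^C=(\langle\CalC_i,\CalC_j\rangle)_{i,j}$, and a ``sine'' block $G^S=(\langle\CalS_i,\CalS_j\rangle)_{i,j}$. It therefore suffices to bound the quadratic forms of $G^C$ and $G^S$ from both sides.

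Each block has constant diagonal $1/3$ by part (4). For the off-diagonal part I would use $|\alpha_i\alpha_j|\le\tfrac12(\alpha_i^2+\alpha_j^2)$ to obtain $|\langle G^C\alpha,\alpha\rangle-\tfrac13\|\alpha\|^2|\le\sum_i\alpha_i^2R_i$, where $R_i=\sum_{j\ne i}|\langle\CalC_i,\CalC_j\rangle|$ is the $i$-th Gershgorin radius; the identical estimate holds for $G^S$, since part (3) gives $|\langle\CalS_i,\CalS_j\rangle|=|\langle\CalC_i,\CalC_j\rangle|$. Everything thus reduces to showing $\sup_iR_i<\tfrac13$, which confines the quadratic forms of both blocks to $[\tfrac13-\sup_iR_i,\ \tfrac13+\sup_iR_i]\,\|\alpha\|^2$ and, together with the value $1$ of the constant block, produces positive Riesz constants.

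The heart of the argument is the uniform bound on $R_i$, and this is where I expect the main difficulty to lie. Part (2) shows $\langle\CalC_i,\CalC_j\rangle=0$ unless $i$ and $j$ carry the same power of $2$; writing $i=2^ai'$ with $i'$ odd and substituting the closed form $3\langle\CalC_i,\CalC_j\rangle=\gcd(i,j)^4/(i^2j^2)$ from part (3), one checks that $R_i$ depends only on the odd part $i'$, so it is enough to treat odd $i$. Grouping the remaining (odd) $j$ by $d=\gcd(i,j)$ and evaluating the resulting inner sums $\sum_{k\ \mathrm{odd},\ \gcd(m,k)=1}k^{-2}$ by an Euler product, I would arrive at a multiplicative expression of the shape $R_i=\tfrac13\big(\tfrac{\pi^2}{8}\prod_{p\mid i}(1+p^{-2}-p^{-2(e_p+1)})-1\big)$. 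Since each factor increases to $1+p^{-2}$, the supremum over $i$ is governed by $\prod_{p\ \mathrm{odd}}(1+p^{-2})=\tfrac{\zeta(2)}{\zeta(4)}\cdot(1+2^{-2})^{-1}=12/\pi^2$ (the Euler-product/Ramanujan step), giving $\sup_iR_i=\tfrac13(\tfrac{\pi^2}{8}\cdot\tfrac{12}{\pi^2}-1)=\tfrac16<\tfrac13$. Hence both blocks have quadratic forms in $[\tfrac16,\tfrac12]\|\alpha\|^2$, and after normalizing the functions by $\sqrt3$ one recovers exactly the constants $A=1/2$, $B=3/2$ (the constant block then contributing $1\in[\tfrac12,\tfrac32]$).

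It remains to verify that $\mathcal{R}_1$ is complete, not merely a Riesz sequence. Here I would use the Fourier expansions \eqref{eq:CSFourier}: each $\sqrt3\,\CalC_k$ (resp. $\sqrt3\,\CalS_k$) is an explicit $\ell^{-2}$-weighted superposition of the trigonometric functions indexed by the odd multiples of $k$. Fixing the power of $2$ in the index and performing M\"obius inversion over the odd numbers lets me solve these relations for the individual $c_\ell$ and $s_\ell$ as $L_2$-convergent series in the $\CalC_k$ and $\CalS_k$, the rapid decay guaranteeing convergence. Since $\{1\}\cup\{c_\ell,s_\ell\}$ is complete in $L_2([0,1])$, this shows that the closed span of $\mathcal{R}_1$ is all of $L_2([0,1])$, finishing the proof. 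The two places I expect to demand the most care are the Euler-product evaluation of $\sup_iR_i$ and the convergence of the M\"obius inversion in the completeness step.
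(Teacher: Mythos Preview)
Your proposal is correct and follows essentially the same approach as the paper: block-decompose the Gram matrix, bound the off-diagonal row sums via the formulas of Lemma~\ref{lem:inner_productC} and an Euler-product/Ramanujan identity to get Gershgorin radii at most $1/6$ (equivalently $1/2$ after normalization), and then establish completeness by M\"obius-inverting \eqref{eq:CSFourier}. The only noteworthy differences are cosmetic: you compute the exact row sum $R_i=\tfrac13\big(\tfrac{\pi^2}{8}\prod_{p\mid i}(1+p^{-2}-p^{-2(e_p+1)})-1\big)$ rather than bounding it, and you handle the infinite system directly via $|\alpha_i\alpha_j|\le\tfrac12(\alpha_i^2+\alpha_j^2)$ instead of truncating and passing to the limit as the paper does in its Step~3.
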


Before we come to the proof, several remarks seem to be in order.

\begin{rem}
 \begin{enumerate}
 \item We will actually show, that the Riesz constants of the $L_2$-normalized system
 ${\overline{\mathcal R}}_1:=\{1\}\cup\{\sqrt{3}\,\CalC_k,\sqrt{3}\,\CalS_k:k\in\N\}$
 can be chosen as $A=1/2$ and $B=3/2.$
 \item We divide the proof of Theorem \ref{thm:CSd1} into several steps. In the first two steps, we show that the truncated
 system ${\overline{\mathcal R}}_1^N:=\{1\}\cup \{\sqrt{3}\,\CalC_k,\sqrt{3}\,\CalS_k:k\le N\}$ forms a Riesz sequence
 (i.e., that it satisfies \eqref{eq:Riesz}) with $A=1/2$ and $B=3/2$  being independent  of  $N\in\N$. The first step
 reduces this question to spectral properties of the Gram matrix of ${\overline{\mathcal R}}_1^N$ and in the second step
 we apply the Gershgorin theorem to bound this spectrum. The third step describes how we pass to the limit $N\to\infty$
 to deduce that ${\overline{\mathcal R}}_1$ is also a Riesz sequence. Finally, the fourth step shows that ${\overline{\mathcal R}}_1$
 is also a basis, i.e., that its closed linear span is $L_2([0,1])$.
\item Our proof shows that the spectrum of the Gram matrix of ${\overline{\mathcal R}}_1^N$ (for arbitrary $N\in \N$) is contained in $[1/2,3/2].$
We leave it as an open problem to find out if these bounds are actually optimal. Supported by   numerical evidence
(see Figure \ref{fig:Gram} for details), our conjecture is that there is indeed some space for improvement.
\begin{figure}[h!]
\begin{center}\includegraphics[width=6.5cm]{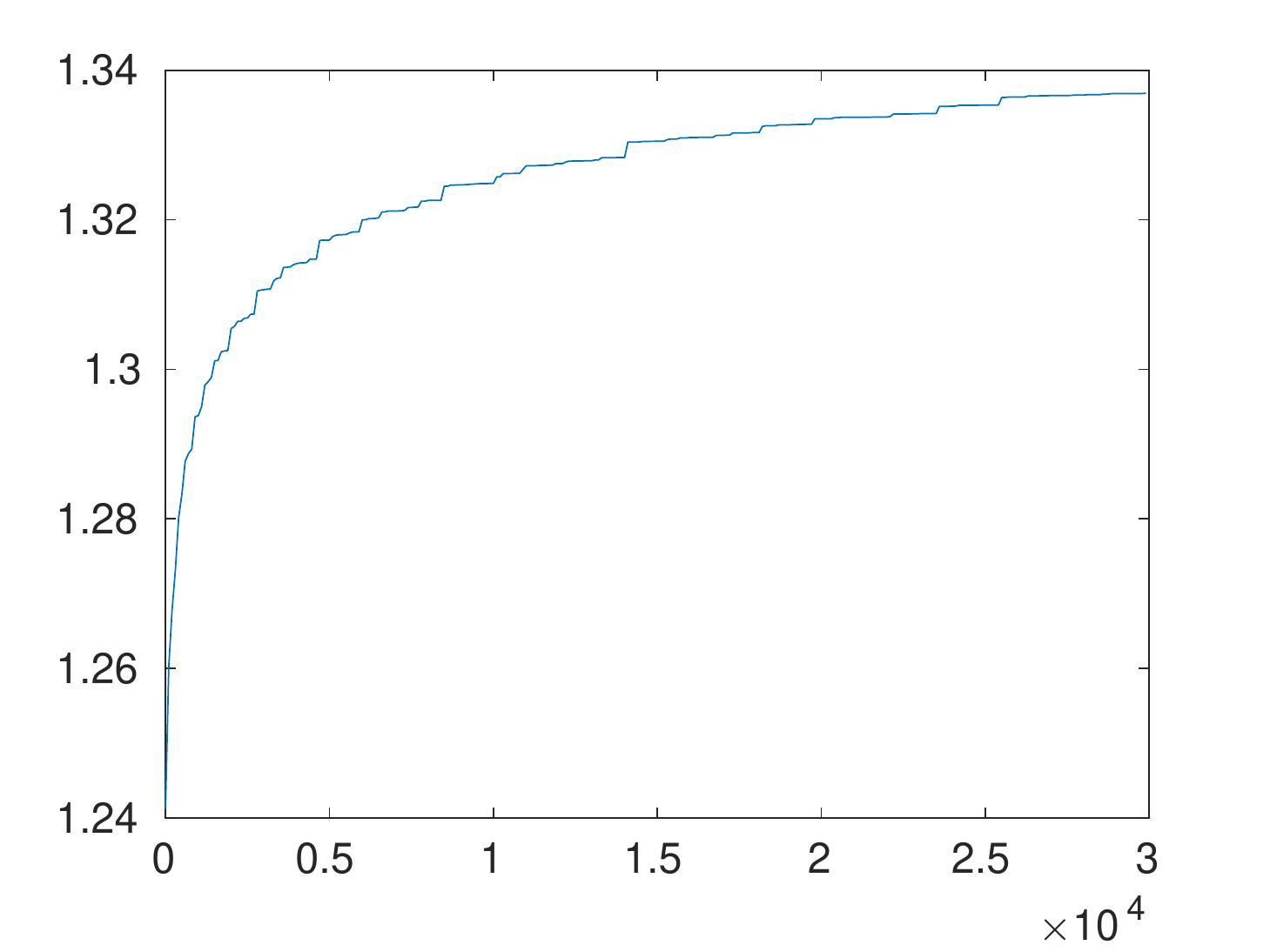}
\includegraphics[width=6.5cm]{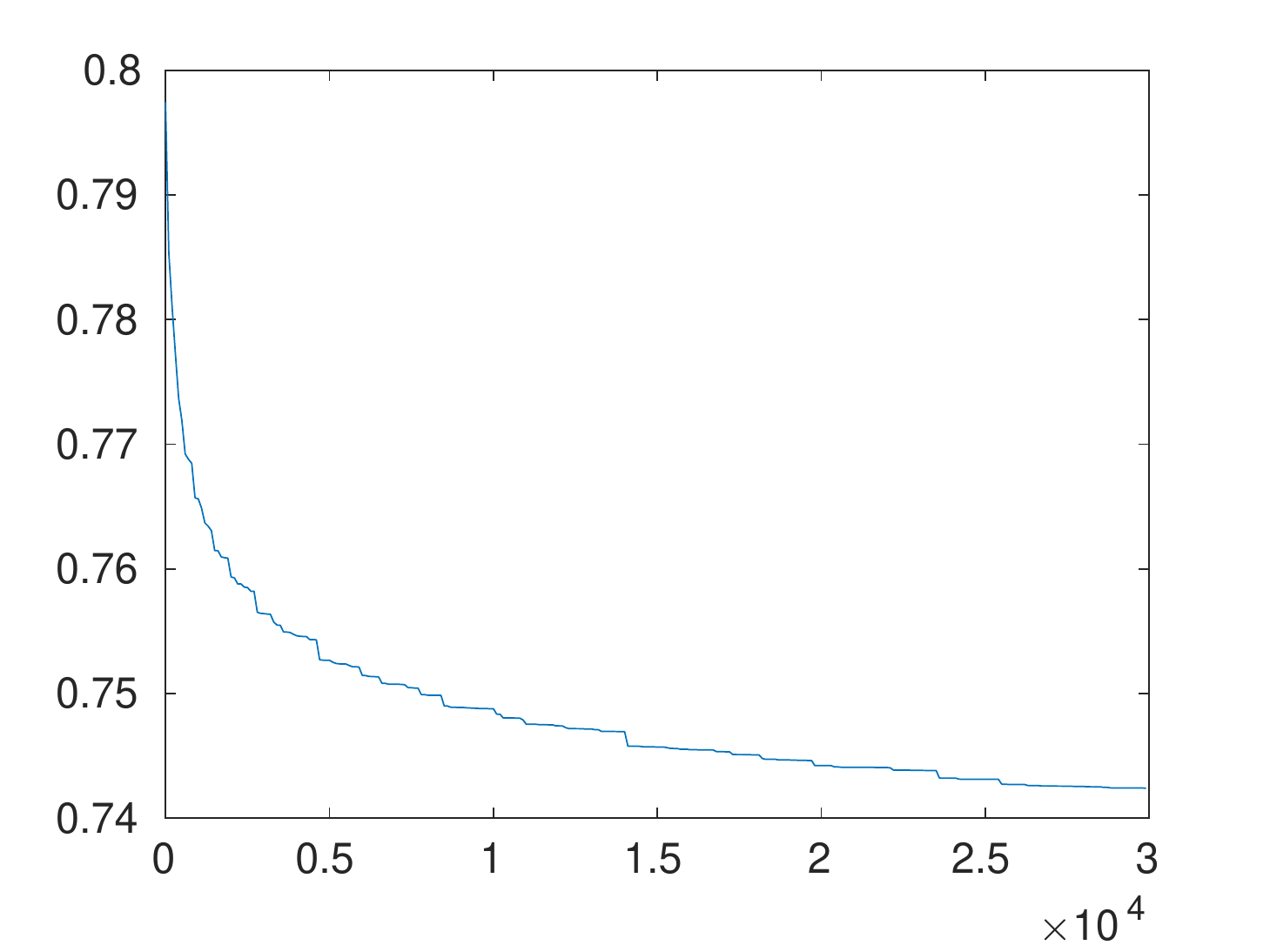} 
\caption{The largest (left) and the smallest (right) eigenvalue of $(\langle \CalC_i,\CalC_j\rangle)_{i,j=1}^N$ for $1\le N\le 3\cdot 10^4$.}\label{fig:Gram} 
\end{center}
\end{figure}
\end{enumerate}
\end{rem}
\emph{Proof of Theorem \ref{thm:CSd1}: Step 1.}\\
First we reformulate the definition of a (finite) Riesz sequence as an eigenvalue problem
of its Gram matrix. This reformulation is rather straightforward and by no means new, see \cite{KimLim} or \cite[Chapter 1.8]{Young}.
Let $H$ be a real Hilbert space and let $\{x_i\}_{i=1}^N\subset H$. Then, for every $\alpha=(\alpha_1,\dots,\alpha_N)^T\in\R^N$
\[
\left\|\sum_{i=1}^N \alpha_i x_i\right\|^2=\sum_{i,j=1}^N \alpha_i\alpha_j\langle x_i,x_j\rangle = \alpha^TG\alpha,
\]
where $G=(g_{i,j})_{i,j=1}^N$ with $g_{i,j}=\langle x_i,x_j\rangle$ is the Gram matrix of $\{x_i\}_{i=1}^N$.
Therefore, \eqref{eq:Riesz} is equivalent to  $A\alpha^T\alpha\le \alpha^TG\alpha\le B\alpha^T\alpha$ for every $\alpha\in\R^N$ 
or  simply to $\sigma(G)\subset [A,B].$
To show that this is indeed true for a given Gram matrix $G$, we will use the Gershgorin circle theorem \cite[Theorem 6.1.1]{HJ},
which states that
\[
\sigma(G)\subset \bigcup_{i=1}^N\left[g_{i,i}-\sum_{j\not=i}|g_{i,j}|,g_{i,i}+\sum_{j\not=i}|g_{i,j}|\right].
\]
\vskip.5cm
\emph{Proof of Theorem \ref{thm:CSd1}: Step 2.}\\
In this step, we show that $\{1\}\cup \{\sqrt{3}\,\CalC_k,\sqrt{3}\,\CalS_k:k\le N\}$ forms a Riesz sequence
for every $N\in\N$ with the Riesz constants independent on $N$. By Lemma \ref{lem:inner_productC},
its Gram matrix $G_N\in\R^{(2N+1)\times (2N+1)}$ is a block matrix with three blocks. The first one is just a $1\times 1$
block corresponding to the constant function, the second and the third block are $N\times N$ matrices of the inner products
$\big(3\langle \CalC_i,\CalC_j\rangle\big)_{i,j=1}^N$ and $\big(3\langle \CalS_i,\CalS_j\rangle\big)_{i,j=1}^N$, respectively, i.e., 

\newcommand{\bigzero}{\mbox{\normalfont\Large\bfseries 0}}
\newcommand{\rvline}{\hspace*{-\arraycolsep}\vline\hspace*{-\arraycolsep}}

\[
G_N=\begin{pmatrix}
  \begin{matrix}
 {1}
  \end{matrix}
  & \dashrule[-0.6ex]{0.4}{3 1.5 3 1.5 3} 
 & \bigzero 
 &\dashrule[-0.6ex]{0.4}{3 1.5 3 1.5 3} 
 & \bigzero\\
\hdashline[2pt/2pt]
  \bigzero & \dashrule[-9ex]{0.4}{3 1.5 3 1.5 3 1.5 3 1.5 3 1.5 3 1.5 3 1.5 3 1.5 3 1.5 3 1.5 3
  1.5 3 1.5 3 1.5 3 1.5 3 1.5 3 1.5 3 1.5 3 1.5 3 1.5 3 
  }  
  &    
 \begin{matrix}
  \overbrace{3\langle \CalC_1,\CalC_1\rangle}^{=1} & \ldots & 3\langle \CalC_1,\CalC_N\rangle\\
  \vdots & \ddots & \vdots \\
  3\langle \CalC_N,\CalC_1\rangle & \ldots & \underbrace{3\langle \CalC_N,\CalC_N\rangle}_{=1}
  \end{matrix} 
 & \dashrule[-9ex]{0.4}{3 1.5 3 1.5 3 1.5 3 1.5 3 1.5 3 1.5 3 1.5 3 1.5 3 1.5 3 1.5 3
  1.5 3 1.5 3 1.5 3 1.5 3 1.5 3 1.5 3 1.5 3 1.5 3 1.5 3 
  }  
  & \bigzero\\
  \hdashline[2pt/2pt]
  \bigzero & 
  \dashrule[-9ex]{0.4}{3 1.5 3 1.5 3 1.5 3 1.5 3 1.5 3 1.5 3 1.5 3 1.5 3 1.5 3 1.5 3
  1.5 3 1.5 3 1.5 3 1.5 3 1.5 3 1.5 3 1.5 3 1.5 3 1.5 3 
  } 
  & \bigzero & \dashrule[-9ex]{0.4}{3 1.5 3 1.5 3 1.5 3 1.5 3 1.5 3 1.5 3 1.5 3 1.5 3 1.5 3 1.5 3
  1.5 3 1.5 3 1.5 3 1.5 3 1.5 3 1.5 3 1.5 3 1.5 3 1.5 3 
  } 
  & 
  \begin{matrix}
  \overbrace{3\langle \CalS_1,\CalS_1\rangle}^{=1} & \ldots & 3\langle \CalS_1,\CalS_N\rangle\\
  \vdots & \ddots & \vdots \\
  3\langle \CalS_N,\CalS_1\rangle & \ldots & \underbrace{3\langle \CalS_N,\CalS_N\rangle}_{=1}
  \end{matrix} 
\end{pmatrix}.
\]

We apply the Gershgorin theorem to $G_N$.  Therefore, we need to estimate the row sums of $G_N$. 
For the first row we see that $g_{1,1} =1$ and $g_{1,j}=0$ for all $j\neq 1$. 

 Next we assume that $i\le N$ is an odd number, i.e. that
\[
i=q_1^{\alpha_1}\dots q_n^{\alpha_n}
\]
for some primes $q_1,\dots,q_n\ge 3$ and integers $\alpha_1,\dots,\alpha_n\ge 1$.
For the $(i+1)$-th row ($i=1,\ldots, N$, corresponding to $\CalC_i$)  we conclude  
\begin{equation}\label{eq:C_sum}
\langle \sqrt{3}\,{\CalC_i},1\rangle+\sum_{j=1}^N 3\cdot \langle{\mathcal C}_i,{\mathcal C}_j\rangle+\sum_{j=1}^N 3\cdot \langle{\mathcal C}_i,{\CalS}_j\rangle=\sum_{1\le j\le N, j\ \text{odd}} 3\cdot \langle{\mathcal C}_i,{\mathcal C}_j\rangle
\le \sum_{j\in\N, j\ \text{odd}} 3\cdot \langle{\mathcal C}_i,{\mathcal C}_j\rangle. 
\end{equation}
Every odd $j$ can be written as $j=q_1^{\beta_1}\dots q_n^{\beta_n}\cdot J$, where $\beta_1,\dots,\beta_n\ge 0$ and
$J$ is an odd integer, not divisible by any of $q_1,\dots,q_n$,
i.e., with $\gcd(J,i)=1$. Observe that with this notation
\[
\gcd(i,j)=\prod_{u=1}^n q_u^{\min(\alpha_u,\beta_u)}.
\]
Therefore, we can use Lemma \ref{lem:inner_productC} and rewrite \eqref{eq:C_sum} as
\begin{align*}
\sum_{j\in\N, \; j\text{ odd}} 3\cdot \langle{\mathcal C}_i,{\mathcal C}_j\rangle
&=\sum_{\beta_1,\dots,\beta_n=0}^\infty\sum_{\substack{J\in\N, J\ \text{odd}\\\gcd(J,i)=1}}
\frac{\displaystyle\prod_{u=1}^n q_u^{4\min(\alpha_u,\beta_u)}} {{\left(q_1^{\alpha_1}\cdots q_n^{\alpha_n}\right)^2}
\cdot \left(q_1^{\beta_1}\dots q_n^{\beta_n}\cdot J \right)^2}\\
&=\sum_{\substack{J\in\N, J\ \text{odd}\\\gcd(J,i)=1}} \frac{1}{J^2} \cdot \sum_{\beta_1=0}^\infty\frac{1}{\left[q_1^{\alpha_1+\beta_1-2\min(\alpha_1,\beta_1)}\right]^2}
\cdots \sum_{\beta_n=0}^\infty\frac{1}{\left[q_n^{\alpha_n+\beta_n-2\min(\alpha_n,\beta_n)}\right]^2}.
\end{align*}
Next, we simplify the individual terms.
\begin{align*}
\sum_{\substack{J\in\N, J\ \text{odd}\\\gcd(J,i)=1}} \frac{1}{J^2}&=
\prod_{\substack{p\ge 3: p\text{ prime}\\ p\not\in\{q_1,\dots,q_n\}}}\left(1+\frac{1}{p^2}+\frac{1}{p^4}+\dots\right)
=\prod_{\substack{p\ge 3: p\text{ prime}\\ p\not\in\{q_1,\dots,q_n\}}} \frac{1}{1-1/p^2}
\end{align*}
and
\begin{align*}
\sum_{\beta=0}^\infty\frac{1}{\left[q^{\alpha+\beta-2\min(\alpha,\beta)}\right]^2}&=\sum_{\beta=0}^\alpha\frac{1}{q^{2(\alpha-\beta)}}
+\sum_{\beta=\alpha+1}^\infty\frac{1}{q^{2(\beta-\alpha)}}\\
&\le \sum_{r=0}^\infty\frac{1}{q^{2r}}+\sum_{r=1}^\infty\frac{1}{q^{2r}}=\left(1+1/q^2\right)\cdot \frac{1}{1-1/q^2}.
\end{align*}
Therefore,
\begin{align}
\eqref{eq:C_sum}&\le \prod_{p\ge 3: p\text{ prime}}\frac{1}{1-1/p^2}\cdot\prod_{u=1}^n\left(1+\frac{1}{q_u^2}\right)
\le \prod_{p\ge 3: p\text{ prime}}\frac{1+1/p^2}{1-1/p^2}=\frac 32,  \label{euler-prod}
\end{align}
where in the last step we used the following Euler product \cite[Page 5]{Titch} attributed already to Ramanujan
\[
\prod_{p\text{ prime}}\frac{1+1/p^2}{1-1/p^2}=\frac 52. 
\]
Therefore, we get 
\begin{equation}\label{gersh-est}
\sum_{j\not= i}|(G_N)_{i,j}|=\sum_{j\not= i} 3\langle{\mathcal C}_i,{\mathcal C}_j\rangle
=\sum_{j= 1}^N 3\langle{\mathcal C}_i,{\mathcal C}_j\rangle-3\langle {\mathcal C_i,{\mathcal C}_i}\rangle\le \frac 32-1=\frac 12.
\end{equation}

If $i$ is even it follows from  Lemma \ref{lem:inner_productC}, assertions 2. and 3., that the estimates above remain the same since $3\cdot \langle{\mathcal C}_i,{\mathcal C}_j\rangle=\frac{\gcd(i,j)^4}{i^2\cdot j^2}\neq 0$ only for $j$'s with the same power of $2$
in their prime factorization as $i$, which then cancels out.

If we replace in \eqref{eq:C_sum}  $\CalC_i$  by  $\CalS_i$ (corresponding to the rows $(N+1)+i$, $i=1,\ldots, N$ of the Gram matrix), we obtain instead the estimate
\begin{equation}\label{eq:S_sum}
\langle \sqrt{3}\,{\CalS_i},1\rangle+\sum_{j=1}^N 3\cdot \langle{\mathcal S}_i,{\mathcal C}_j\rangle+\sum_{j=1}^N 3\cdot 
|\langle{\CalS}_i,{\CalS}_j\rangle|\leq \sum_{1\le j\le N, j\ \text{odd}} 3\cdot |\langle{\mathcal S}_i,{\mathcal S}_j\rangle|
\le \sum_{j\in\N, j\ \text{odd}} 3\cdot |\langle{\mathcal S}_i,{\mathcal S}_j\rangle|. 
\end{equation}
The term on the right hand side of \eqref{eq:S_sum} can be bounded by \eqref{euler-prod} as before and we again obtain
\begin{equation}\label{gersh-est-2}
\sum_{j\not= i}|(G_N)_{i,j}|=\sum_{j\not= i} 3|\langle{\mathcal S}_i,{\mathcal S}_j\rangle|
=\sum_{j= 1}^N 3|\langle{\mathcal S}_i,{\mathcal S}_j\rangle|-3\langle {\mathcal S_i,{\mathcal S}_i}\rangle\le \frac 32-1=\frac 12.
\end{equation}
By Gershgorin's theorem, we deduce   
$\sigma(G_N)\subset \{1\}\cup [\frac 12,\frac 32] =[\frac 12,\frac 32].$

\vskip.5cm
\emph{Proof of Theorem \ref{thm:CSd1}: Step 3.}\\
The third step of the proof of Theorem \ref{thm:CSd1}, i.e., the passage
to the limit $N\to\infty$, is quite standard and straightforward
(cf. \cite{KimLim} and \cite{Ole_P1,Ole_P2} for the so-called ``projection method'') and is contained in the following lemma.
\begin{lemma}
Let $H$ be a real Hilbert space and let $(x_n)_{n=1}^\infty\subset H$ be an infinite sequence. If $\{x_1,\dots,x_N\}$ is a Riesz sequence
for every $N\in\N$ with Riesz constants $A$ and $B$ independent on $N$, then $\{x_1,x_2,\dots\}$ is also a Riesz sequence with Riesz constants $A$ and $B$.
\end{lemma}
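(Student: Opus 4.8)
\emph{Proof proposal.} The plan is to show that for any square summable sequence $(\alpha_n)_n$ the partial sums converge in $H$, and then to pass to the limit in the finite-dimensional Riesz inequalities supplied by the hypothesis. The only genuine subtlety is that the infinite series $\sum_n\alpha_nx_n$ must first be shown to converge before \eqref{eq:Riesz} even makes sense for it; everything else is a routine limiting argument, so I do not expect a serious obstacle.

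First I would fix a square summable $(\alpha_n)_n$ and set $S_N=\sum_{n=1}^N\alpha_nx_n$. To prove that $(S_N)_N$ is Cauchy, take $M>N$ and rewrite the block $S_M-S_N=\sum_{n=N+1}^M\alpha_nx_n$ as $\sum_{n=1}^M\beta_nx_n$, where $\beta_n=0$ for $n\le N$ and $\beta_n=\alpha_n$ for $N<n\le M$. Applying the upper Riesz bound of the hypothesis to the finite system $\{x_1,\dots,x_M\}$ then yields
\[
\|S_M-S_N\|^2\le B\sum_{n=1}^M\beta_n^2=B\sum_{n=N+1}^M\alpha_n^2 .
\]
Since $\sum_n\alpha_n^2<\infty$, the tail on the right tends to $0$ as $N,M\to\infty$, so $(S_N)_N$ is Cauchy. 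By completeness of $H$ it converges to some $x\in H$, and we set $\sum_n\alpha_nx_n:=x$.

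It then remains to pass to the limit. For each fixed $N$ the hypothesis gives
\[
A\sum_{n=1}^N\alpha_n^2\le \|S_N\|^2\le B\sum_{n=1}^N\alpha_n^2 .
\]
Letting $N\to\infty$ and using both the continuity of the norm (so that $\|S_N\|\to\|x\|$) and the convergence $\sum_{n=1}^N\alpha_n^2\to\sum_n\alpha_n^2$, we arrive at $A\sum_n\alpha_n^2\le\|x\|^2\le B\sum_n\alpha_n^2$, which is exactly \eqref{eq:Riesz} for the full sequence with the same constants $A$ and $B$. The single point requiring care is the zero-padding trick in the first step: it is what allows the hypothesis, stated only for the initial segments $\{x_1,\dots,x_N\}$, to control the norm of an arbitrary block $\sum_{n=N+1}^M\alpha_nx_n$ and thereby establish convergence of the series.
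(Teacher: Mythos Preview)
Your proof is correct and follows essentially the same approach as the paper's: establish convergence of the series via a Cauchy argument using the upper Riesz bound, then pass to the limit in the finite-$N$ inequalities using continuity of the norm. If anything, your version is slightly more careful, since you make the zero-padding explicit to justify applying the hypothesis (stated only for initial segments $\{x_1,\dots,x_N\}$) to an arbitrary block $\sum_{n=N+1}^{M}\alpha_n x_n$, a point the paper glosses over.
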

\begin{proof}
Let $\alpha=(\alpha_1,\alpha_2,\dots)$ be a square-summable sequence. Since 
\[
\left\|\sum_{n=n_0}^{n_1}\alpha_n x_n\right\|^2\le B \sum_{n=n_0}^{n_1}\alpha_n^2,
\]
the partial sums of $\sum_{n=1}^\infty \alpha_nx_n$ form a Cauchy sequence and therefore the series is convergent. Furthermore, by the triangle inequality
\[
\left\|\sum_{n=1}^{N}\alpha_n x_n\right\|\to \left\|\sum_{n=1}^{\infty}\alpha_n x_n\right\|.
\]
Hence, we can take the limit $N\to\infty$ in
\[
A \sum_{n=1}^{N}\alpha_n^2\le \left\|\sum_{n=1}^{N}\alpha_n x_n\right\|^2\le B \sum_{n=1}^{N}\alpha_n^2
\]
and the result follows.
\end{proof}

\vskip.5cm
\emph{Proof of Theorem \ref{thm:CSd1}: Step 4.}\\
As the last step, we show that ${\mathcal R}_1$ is not only a Riesz sequence but also a Riesz basis,
i.e., that its closed linear span is the whole space $L_2([0,1])$. We rely on the fact that the trigonometric
system
\begin{equation}\label{eq:T1}
{\mathcal T}_1:=\{1\}\cup \{\sqrt{2}\cos(2\pi kx):k\in\N\}\cup\{\sqrt{2}\sin(2k\pi x):k\in\N\}
\end{equation}
forms an orthonormal basis of $L_2([0,1])$. We show that every function from \eqref{eq:T1}
lies in the closed linear span of ${\mathcal R}_1$ and, therefore, the closed linear span of ${\mathcal T}_1$
is contained in the closed linear span of ${\mathcal R}_1.$

We start with the following lemma, which gives an explicit decomposition of $\cos(2\pi x)$ and 
$\sin(2\pi x)$ in ${\mathcal R}_1$. Its statement requires the notion of the M\"obius function,
which is defined for every positive integer $n\in\N$ as
\[
\mu(n)=\begin{cases}
+1,\quad &\text{if $n$ is a square-free integer with an even number of prime factors},\\
-1,\quad &\text{if $n$ is a square-free integer with an odd number of prime factors},\\
0,\quad &\text{if $n$ is not a square-free integer, i.e., if it is divisible by some squared prime.}
\end{cases}
\]

\begin{lemma} For every $l\in\N$, let 
$\overline{\CalC}_l(x)=\sqrt{3}\,\CalC_l(x)/\mu$ and $\overline{\CalS}_l(x)=\sqrt{3}\,\CalS_l(x)/\mu$, 
where $\mu$ is the constant from \eqref{eq:mu}. Then
\begin{equation}\label{eq:decomp_c}
c_1(x):=\sqrt{2}\cos(2\pi x)=\sum_{l=0}^\infty \frac{\mu(2l+1)}{(2l+1)^2}\,\overline{\CalC_{2l+1}}(x)
\end{equation}
and
\begin{equation}\label{eq:decomp_s}
s_1(x):=\sqrt{2}\sin(2\pi x)=\sum_{l=0}^\infty (-1)^l\frac{\mu(2l+1)}{(2l+1)^2}\,\overline{\CalS_{2l+1}}(x)
\end{equation}
with the convergence being in $L_2([0,1])$.
\end{lemma}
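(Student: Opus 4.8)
The plan is to verify both identities by substituting the Fourier expansions \eqref{eq:CSFourier} into the right-hand sides and comparing Fourier coefficients against the orthonormal trigonometric basis $\{1\}\cup\{c_k\}\cup\{s_k\}$. By definition $\overline{\CalC}_l=\sum_{m\ge 0}(2m+1)^{-2}c_{(2m+1)l}$ and $\overline{\CalS}_l=\sum_{m\ge 0}(-1)^m(2m+1)^{-2}s_{(2m+1)l}$, so each right-hand side is a formal double series in the $c$'s (resp. the $s$'s). The whole argument then reduces to two elementary facts about the multiplicative structure of the odd integers, and the mechanism behind \eqref{eq:decomp_c}--\eqref{eq:decomp_s} is nothing but a Möbius inversion of \eqref{eq:CSFourier}.

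Before computing coefficients I would first settle convergence, so that inner products may be taken term by term. Since $\overline{\CalC}_{2l+1}=(\sqrt3/\mu)\CalC_{2l+1}$ and the scalar sequence $(\mu(2l+1)/(2l+1)^2)_l$ is square-summable (dominated by $\sum_l(2l+1)^{-4}<\infty$), the upper Riesz bound $B=3/2$ established above shows that the partial sums of the right-hand side of \eqref{eq:decomp_c} are Cauchy in $L_2$; hence the series converges to some $f\in L_2([0,1])$, and the same applies to \eqref{eq:decomp_s}. Continuity of the inner product then lets me compute $\langle f,c_N\rangle$ (as well as $\langle f,s_N\rangle$ and $\langle f,1\rangle$) by summing term by term.

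For \eqref{eq:decomp_c} I compute the coefficient of $c_N$: collecting the contributions of all factorizations $N=(2l+1)(2m+1)$ into two odd factors and writing $a=2l+1$ gives $\langle f,c_N\rangle=N^{-2}\sum_{a\mid N,\,a\text{ odd},\,N/a\text{ odd}}\mu(a)$. If $N$ is even this sum is empty (an odd $a$ cannot make $N/a$ odd), and if $N$ is odd every divisor is automatically odd with odd complement, so the sum is the classical Möbius identity $\sum_{a\mid N}\mu(a)=[N=1]$. Since the $\CalC_k$ have mean zero and are orthogonal to every $s_N$ by Lemma \ref{lem:inner_productC}, all remaining Fourier coefficients of $f$ vanish, and therefore $f=c_1$.

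The sine identity \eqref{eq:decomp_s} follows the same scheme, and carrying the signs through the divisor sum is the one point that needs a genuine idea. Here the coefficient of $s_N$ (for odd $N$) is $N^{-2}\sum_{a\mid N}(-1)^{(a-1)/2}(-1)^{(N/a-1)/2}\mu(a)$. The key observation is that $\chi(n):=(-1)^{(n-1)/2}$ is the nonprincipal Dirichlet character modulo $4$ and is therefore completely multiplicative on the odd integers, so $(-1)^{(a-1)/2}(-1)^{(N/a-1)/2}=\chi(a)\chi(N/a)=\chi(N)$ is independent of the chosen factorization. Pulling $\chi(N)$ out of the sum reduces it once more to $\chi(N)N^{-2}\sum_{a\mid N}\mu(a)=[N=1]$ (using $\chi(1)=1$), while even $N$ again give an empty sum; hence $f=s_1$. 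The main obstacle is thus not analytic but this sign bookkeeping: recognizing the alternating factors $(-1)^l$ and $(-1)^m$ as values of the character $\chi$ is exactly what makes the divisor sum collapse in the same way as in the cosine case.
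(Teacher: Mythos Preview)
Your proof is correct and slightly more streamlined than the paper's. Both arguments ultimately rest on M\"obius inversion, but the paper chooses to set up the coefficient equations \eqref{eq:inv:4} and then solve them \emph{ab initio} by induction on the number of prime factors (explicitly noting that ``this system could be solved by using the M\"obius inversion formula, but one can also proceed directly''); you instead substitute the Fourier expansions and invoke the classical identity $\sum_{a\mid N}\mu(a)=[N=1]$ directly. For the sine identity the approaches diverge more: the paper deduces \eqref{eq:decomp_s} from \eqref{eq:decomp_c} via the shift $\CalS(x)=\CalC(x-1/4)$ and the resulting relation $\CalC_{2l+1}(x-1/4)=(-1)^l\CalS_{2l+1}(x)$, whereas you stay on the Fourier side and recognize $(-1)^{(n-1)/2}$ as the nonprincipal character mod~$4$, whose complete multiplicativity makes the sign drop out of the divisor sum. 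Your route is a bit cleaner conceptually (one mechanism handles both cases), while the paper's shift argument avoids the need to know anything about Dirichlet characters. One small quibble: the orthogonality $\langle\CalC_k,s_N\rangle=0$ you cite follows immediately from the Fourier expansion \eqref{eq:CSFourier}, not from Lemma~\ref{lem:inner_productC} (which gives $\langle\CalC_i,\CalS_j\rangle=0$).
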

\begin{proof}
We first reformulate \eqref{eq:CSFourier} as
\begin{equation}\label{eq:inv:1}
\overline{\CalC}_1(x)=\frac{\sqrt{3}\, {\mathcal C}(x)}{\mu}
=\sum_{m=1}^\infty\frac{\alpha_{m}}{m^2}\cdot c_{m}(x),
\end{equation}
where $\alpha_{2m+1}=1$ and $\alpha_{2m}=0$. Note, that \eqref{eq:inv:1} converges in $L_2([0,1])$.

We show that there is a unique bounded sequence $(\beta_l)_{l=1}^\infty$, such that
\begin{equation}\label{eq:decompc1}
c_1(x)=\sum_{l=1}^\infty\frac{\beta_l}{l^2}\cdot {\overline{\CalC}_l}(x)
\end{equation}
with the convergence in $L_2([0,1])$ and that 
\begin{equation}\label{eq:inv:3}
    \beta_l=\begin{cases}0& \text{for $l\in\N$ even},\\
\mu(l)&\text{for $l\in\N$ odd}. 
\end{cases}
\end{equation}

Using \eqref{eq:inv:1}, we observe that \eqref{eq:decompc1} holds for bounded sequence $(\beta_l)_{l=1}^\infty$ if, and only if,
\begin{equation}\label{eq:inv:2}
c_1(x)=\sum_{l=1}^\infty\sum_{m=1}^\infty\frac{\beta_l\cdot \alpha_{m}}{l^2\cdot m^2} c_{lm}(x).
\end{equation}
We compare the coefficients of $c_k(x)$ on both sides of \eqref{eq:inv:2} and observe that \eqref{eq:inv:2}
is equivalent to the system of equations
\begin{align}
\begin{split}
1&=\beta_1\alpha_1,\label{eq:inv:4}\\
0&=\sum_{(l,m):l\cdot m=n}\beta_l\alpha_m,\quad n=2,3,\dots.
\end{split}
\end{align}

This system could be solved by using the M\"obius inversion formula \cite[p.~3]{Titch}, but one can also proceed directly.
From $1=\beta_1\alpha_1$ we obtain $\beta_1=1$ and from $\beta_1\alpha_2+\beta_2\alpha_1=0$ we get 
$\beta_2=0$. We show by induction that $\beta_{2n}=0$ also for all $n\ge 1$. Let this be true
for all integers smaller than $n$. Then
\[
0=\sum_{(l,m):l\cdot m=2n}\beta_l\alpha_m=
\sum_{(l,m):l\cdot m=n}\beta_{2l}\alpha_m+\sum_{(l,m):l\cdot m=n}\beta_l\alpha_{2m}=\beta_{2n}\alpha_1
\]
gives $\beta_{2n}=0$ as well.
Similarly, from $0=\beta_p\alpha_1+\beta_1\alpha_p$ we get $\beta_p=-1$
for every prime $p\neq 2$.

Let now $p=p_1p_2$ with odd primes $p_1,p_2$. Then $\beta_p=1$ follows from 
\begin{equation*}
0=\beta_p\ \alpha_1 + \beta_{p_1}\alpha_{p_2}+\beta_{p_2}\alpha_{p_1}+ \beta_1\alpha_p
= \beta_p\cdot 1+(-1\cdot 1)+(-1\cdot 1)+1.
\end{equation*}
The formula for a general $p=p_1\cdot \ldots \cdot p_k$, with distinct odd primes $p_i$ follows by induction.
Observe that $p$ is divisible by all $p_1^{e_1}\cdot\ldots\cdot p_k^{e_k}$ with $e=(e_1,\dots,e_k)\in\{0,1\}^k$. Hence
\begin{align*}
0=\sum_{(l,m):l\cdot m=p}\beta_l\alpha_m=\sum_{l|p}\beta_l=\beta_p+\sum_{j=0}^{k-1}\binom{k}{j}(-1)^j
=\beta_p+\sum_{j=0}^{k}\binom{k}{j}(-1)^j-(-1)^k=\beta_p-(-1)^k.
\end{align*}
We conclude, that $\beta_n=\mu(n)$ for every odd square-free integer $n$.

Finally, we show that $\beta_n=0$ for every positive odd integer $n$, which is not square-free.
Therefore, we assume that $n=p_1^u\cdot q$, where $p_1$ is an odd prime, $q$ is an odd integer not divisible by $p_1$, $u\ge 2$
and that the statement is true for all integers smaller than $n$. We obtain
\begin{align*}
    0=\sum_{(l,m):l\cdot m=n}\beta_l\alpha_m=\sum_{l|p_1^uq}\beta_l=\sum_{l|q}(\beta_l+\beta_{p_1l}+\dots+\beta_{p_1^ul}).
\end{align*}
If $l<q$ is square-free, then the first two terms in this sum have values $+1$ and $-1$, respectively, and the others vanish
by the induction assumption. If $l<q$ is not square-free, then all the terms vanish again by assumption. Finally, if $l=q$
the same argument applies leaving us with $\beta_n=0$.
We conclude that the sequence $(\beta_l)_{l=1}^\infty$ given by \eqref{eq:inv:3} indeed satisfies the system \eqref{eq:inv:4}
which in turn gives \eqref{eq:decompc1}.

Finally, \eqref{eq:decomp_s} follows from \eqref{eq:decomp_c} using the simple relation $\CalS(x)=\CalC(x-1/4).$ 
The factor $(-1)^l$ results from the relation
\[
\CalC_{2l+1}\Big(x-\frac 14\Big)=(-1)^l\CalC\Big((2l+1)x-\frac 14\Big)=(-1)^l \CalS_{2l+1}(x). 
\]
\end{proof}

\section{Multivariate case}
\label{sect-multivariate}

The main aim of this section is to generalize Theorem \ref{thm:CSd1} to higher dimensions $d\ge 1$
and to provide a Riesz basis of $L_2([0,1]^d)$, which is easily expressed by artificial neural networks
with $\ReLU$ activation function.
The most natural approach would be to consider the tensor products of the functions from ${\mathcal R}_1$, i.e., a system of functions of the form $(x,y)\to \CalC_k(x)\cdot \CalC_l(y)$ etc. 
Indeed, it is quite easy to show that tensor products of elements of a Riesz sequence form again a Riesz sequence \cite{Bour}.
This approach is quite classical in analysis and there exist many multivariate bases and systems with a tensor product structure.
Unfortunately, the Riesz constants of the tensor product system are in general given as products of the Riesz constants
of the univariate Riesz sequences, cf. \cite[Theorem 4.1]{Bour}. Applying the tensor product construction to ${\mathcal R}_1$
would therefore lead to an exponential dependence of the ratio of the Riesz constants on the dimension.

Furthermore, the tensor product approach does not fit really well to artificial neural networks.
The reason is that it is surprisingly difficult to construct a neural network, which for two real inputs $x$ and $y$
outputs the product $xy$ (or at least its approximation). In general, one first approximates the square function $t\to t^2$
and then applies the formula $xy=[(x+y)^2-(x-y)^2]/4.$ We refer to \cite{EPGB, Schm, Yarot} for details. Therefore, we are looking
for another multivariate Riesz basis of piecewise affine functions, which can  be constructed without the use of (tensor) products, 
but where inner products with fixed vectors in $\R^d$ are allowed.

 Before we state our results, we need some additional notation. If $\alpha=(\alpha_1,\dots,\alpha_d)\in\Z^d$, we say that 
$\alpha \righttriangleplus 0$  if the first non-zero  entry  of $\alpha$ is positive.
The multivariate analogue of ${\mathcal R}_1$ is then defined simply as
\[
{\mathcal R}_d:=\{1\}\cup\{{\mathcal C}(\alpha\cdot x):\alpha\righttriangleplus 0\}\cup\{\CalS(\alpha\cdot x):\alpha\righttriangleplus 0\},
\]
where we interpret the functions $\CalC$ and $\CalS$ as $1-$periodic functions on $\R$.
Note that we need to restrict ourselves to indices $\alpha \righttriangleplus 0$ here,
since ${\mathcal C}(\alpha\cdot x)={\mathcal C}(-\alpha\cdot x)$ and 
${\mathcal S}(\alpha\cdot x)=-{\mathcal S}(-\alpha\cdot x)$, respectively.

Furthermore, we say that two non-zero $\alpha,\beta\in\R^d$ are co-linear if there is $t\not=0$ such that $\alpha=t\beta.$
Obviously, if $\alpha,\beta\in\Z^d$ with $\alpha,\beta\righttriangleplus 0$ are co-linear, then $t>0$. In the rest of this section,
the inner product $\displaystyle\langle f, g\rangle=\int_{[0,1]^d}f(x)g(x)dx$ denotes the inner product in $L_2([0,1]^d)$,
the space of real square integrable functions on $[0,1]^d$.

The multivariate analogue of Lemma \ref{lem:inner_productC}, which characterizes the inner products of the elements of
${\mathcal R}_d$ then looks as follows.
\begin{lemma}\label{lem:IPd}
Let $\alpha,\beta\in\Z^d$ with $\alpha,\beta\righttriangleplus 0$. Then
\begin{enumerate}
\item
$\langle\CalC(\alpha\cdot x),\CalS(\beta\cdot x)\rangle=0$.
\item $\langle {\mathcal C}(\alpha\cdot x),{\mathcal C}(\beta\cdot x)\rangle=\langle {\mathcal S}(\alpha\cdot x),{\mathcal S}(\beta\cdot x)\rangle=0$
if $\alpha$ and $\beta$ are not co-linear, or if $\alpha=t\beta$, but $t$ can not be written as a ratio of two odd positive integers.
\item If $\displaystyle \alpha=\frac{2p+1}{2q+1}\cdot \beta$ with coprime integers $2p+1$ and $2q+1$ (i.e.,   $\gcd(2p+1, 2q+1)=1$), then
\begin{equation}
3\langle {\mathcal C}(\alpha\cdot x),{\mathcal C}(\beta\cdot x)\rangle=
\frac{1}{(2p+1)^2(2q+1)^2}
\label{eq:CSinnprod}
\end{equation}
and
\begin{equation}\label{eq:SSinnprod}
3\langle\CalS(\alpha\cdot x),\CalS(\beta\cdot x)\rangle=\frac{(-1)^{p+q}}{(2p+1)^2(2q+1)^2}.
\end{equation}
\end{enumerate}
\end{lemma}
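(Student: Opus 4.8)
The plan is to mimic the univariate argument of Lemma~\ref{lem:inner_productC}, transferring everything to the Fourier side, but now using the \emph{multivariate} trigonometric system. First I would substitute the linear form $t=\alpha\cdot x$ into the one-dimensional Fourier expansion \eqref{eq:CSFourier}. Since $\CalC$ and $\CalS$ are $1$-periodic, this immediately yields
\[
\sqrt{3}\,\CalC(\alpha\cdot x)=\mu\sum_{m\ge 0}\frac{1}{(2m+1)^2}\,c_{(2m+1)\alpha}(x),\qquad
\sqrt{3}\,\CalS(\beta\cdot x)=\mu\sum_{n\ge 0}\frac{(-1)^n}{(2n+1)^2}\,s_{(2n+1)\beta}(x),
\]
where now $c_\gamma(x):=\sqrt{2}\cos(2\pi\gamma\cdot x)$ and $s_\gamma(x):=\sqrt{2}\sin(2\pi\gamma\cdot x)$ for $\gamma\in\Z^d$, with convergence in $L_2([0,1]^d)$ so that the inner products may be computed termwise.

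The second ingredient is the orthogonality of the multivariate trigonometric system. Using the product-to-sum identities together with $\int_{[0,1]^d}\cos(2\pi\eta\cdot x)\,dx=\delta_{\eta,0}$ and $\int_{[0,1]^d}\sin(2\pi\eta\cdot x)\,dx=0$, one obtains $\langle c_\gamma,s_\delta\rangle=0$ for all $\gamma,\delta$, together with $\langle c_\gamma,c_\delta\rangle=\tfrac12(\delta_{\gamma,\delta}+\delta_{\gamma,-\delta})$ and $\langle s_\gamma,s_\delta\rangle=\tfrac12(\delta_{\gamma,\delta}-\delta_{\gamma,-\delta})$. Here the hypothesis $\alpha,\beta\righttriangleplus 0$ does the crucial work: every index $(2m+1)\alpha$ and $(2n+1)\beta$ again satisfies $\righttriangleplus 0$, so two such indices can never be negatives of one another and the $\delta_{\gamma,-\delta}$ terms never contribute. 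Consequently $\langle c_{(2m+1)\alpha},c_{(2n+1)\beta}\rangle=\delta_{(2m+1)\alpha,\,(2n+1)\beta}$ and likewise for the sines, and assertion~1 follows at once from $\langle c_\gamma,s_\delta\rangle=0$.

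For assertions~2 and~3 I would expand the two series and collect the surviving diagonal terms, which amounts to solving the \emph{vector} equation $(2m+1)\alpha=(2n+1)\beta$ in $\N_0^2$. If $\alpha,\beta$ are not co-linear this has no solution, and if $\alpha=t\beta$ it forces $t=(2n+1)/(2m+1)$ to be a ratio of two odd positive integers; this settles assertion~2. In the remaining case $\alpha=\tfrac{2p+1}{2q+1}\beta$ with $\gcd(2p+1,2q+1)=1$, the equation becomes $(2m+1)(2p+1)=(2n+1)(2q+1)$, whose coprimality forces the one-parameter family $2m+1=(2q+1)(2\ell+1)$, $2n+1=(2p+1)(2\ell+1)$, $\ell\ge0$. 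Substituting and factoring out $\sum_{\ell\ge0}(2\ell+1)^{-4}$, which equals $\mu^{-2}$ by \eqref{eq:mu}, gives \eqref{eq:CSinnprod}.

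The only genuinely delicate point is the sign in \eqref{eq:SSinnprod}. Along the surviving family the sine series contributes the factor $(-1)^{m+n}$; writing $m+n=(2\ell+1)(p+q+1)-1$ and using that $2\ell+1$ is odd shows that this factor equals $(-1)^{p+q}$ \emph{independently of $\ell$}, so it pulls out of the sum and produces \eqref{eq:SSinnprod}. I expect this parity bookkeeping, rather than any analytic difficulty, to be the main thing to get right; everything else reduces to the one-dimensional computation already carried out in Lemma~\ref{lem:inner_productC} together with the orthonormality of the $d$-dimensional trigonometric system, which is precisely what lets the construction avoid tensor products.
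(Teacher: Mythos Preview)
Your argument is correct and is in fact cleaner than the route taken in the paper. The paper does not work directly with the multivariate functions $c_\gamma(x)=\sqrt{2}\cos(2\pi\gamma\cdot x)$ and $s_\gamma(x)=\sqrt{2}\sin(2\pi\gamma\cdot x)$; instead it expands $\cos\bigl(\sum_j u_j\bigr)$ and $\sin\bigl(\sum_j u_j\bigr)$ into sums over subsets $J\subset\{1,\dots,d\}$ of products of univariate sines and cosines (formulas \eqref{eq:cossum}--\eqref{eq:sinsum}), reduces to products of one-dimensional integrals, and then has to prove a separate combinatorial identity $\sum_{J}(-1)^{\#(J\cap D)}=0$ to rule out the case where $(2m+1)|\alpha|=(2n+1)|\beta|$ but some coordinates have opposite signs. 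Your observation that $(2m+1)\alpha$ and $(2n+1)\beta$ both satisfy $\righttriangleplus 0$, hence can never be negatives of each other, kills the $\delta_{\gamma,-\delta}$ contribution in one line and bypasses that entire combinatorial detour. Both approaches land on the same diophantine equation $(2m+1)\alpha=(2n+1)\beta$ and the same parametrisation \eqref{eq:mnl}, and your parity computation for the sign in \eqref{eq:SSinnprod} is correct.

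One small slip: with your normalisation $c_\gamma=\sqrt{2}\cos(2\pi\gamma\cdot x)$ the inner product is $\langle c_\gamma,c_\delta\rangle=\delta_{\gamma,\delta}+\delta_{\gamma,-\delta}$, not half of that (and similarly for the sines). Your ``consequently'' line $\langle c_{(2m+1)\alpha},c_{(2n+1)\beta}\rangle=\delta_{(2m+1)\alpha,(2n+1)\beta}$ is nevertheless the right conclusion, so this is only a typo in the intermediate formula and does not affect the proof.
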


\begin{rem}
Lemma \ref{lem:IPd} includes  Lemma \ref{lem:inner_productC} as a special case.  In particular,  if  $d=1$
 then $\alpha$ and $\beta$ are always co-linear. Moreover, if the prime factorizations of $\alpha$ and $\beta$ contain 
 different powers of $2$, then we have  $\alpha\neq \frac{2p+1}{2q+1}\beta$ for all integers $p,q$. 
\end{rem}

\begin{proof}[Proof of Lemma \ref{lem:IPd}]
First, we recall the elementary formulas
\begin{align}
\notag\cos(u_1+\dots+u_d)&=\operatorname{Re}(e^{i(u_1+\dots+u_d)})=\operatorname{Re}\left[\prod_{j=1}^d(\cos u_j+i\sin u_j)\right]\\
\label{eq:cossum}&=\sum_{J\subset\{1,\dots,d\}, \#J\ \text{even}}(-1)^{\#J/2}\prod_{j\not\in J}\cos u_j\cdot\prod_{j\in J}\sin u_j
\end{align}
and
\begin{align}
\notag\sin(u_1+\dots+u_d)&=\operatorname{Im}(e^{i(u_1+\dots+u_d)})=\operatorname{Im}\left[\prod_{j=1}^d(\cos u_j+i\sin u_j)\right]\\
\label{eq:sinsum}&=\sum_{J\subset\{1,\dots,d\}, \#J\ \text{odd}}(-1)^{(\#J-1)/2}\prod_{j\not\in J}\cos u_j\cdot\prod_{j\in J}\sin u_j.
\end{align}

Next, we proceed to the proof of \eqref{eq:CSinnprod}.
Let us fix $\alpha,\beta\righttriangleplus 0$. We apply \eqref{eq:CSFourier} followed by \eqref{eq:cossum} and obtain
\begin{align}
\notag 3&\langle {\mathcal C}(\alpha\cdot x),{\mathcal C}(\beta\cdot x)\rangle=\sum_{m,n\ge0}\frac{\mu^2}{(2m+1)^2(2n+1)^2}\int_{[0,1]^d}c_{2m+1}(\alpha\cdot x)c_{2n+1}(\beta\cdot x)dx\\
\notag&=\sum_{m,n\ge0}\sum_{\substack{J,K\subset \{1,\dots,d\}\\\#J,\#K\ \text{even}}}\frac{2\mu^2}{(2m+1)^2(2n+1)^2}\cdot (-1)^{(\#J+\#K)/2}\cdot\\
\notag&\qquad\qquad\cdot\frac{1}{2^{d}}\cdot \int_{[0,1]^d}\prod_{j\not\in J}c_{2m+1}(\alpha_jx_j)\prod_{j\in J}s_{2m+1}(\alpha_jx_j)\prod_{k\not\in K}c_{2n+1}(\beta_kx_k)\prod_{k\in K}s_{2n+1}(\beta_kx_k)dx\\
\notag&=\sum_{m,n\ge0}\sum_{\substack{J\subset \{1,\dots,d\}\\\#J\ \text{even}}}\frac{2\mu^2}{(2m+1)^2(2n+1)^2}\\
\label{eq:product}&\qquad\qquad\cdot\frac{1}{2^{d}}\cdot \prod_{j\not\in J}\int_{[0,1]}c_{2m+1}(\alpha_jx_j)c_{2n+1}(\beta_jx_j)dx_j\cdot\prod_{j\in J}\int_{[0,1]}s_{2m+1}(\alpha_jx_j)s_{2n+1}(\beta_jx_j)dx_j.
\end{align}
Next, we discuss, when the last product vanishes for given $J$. 
First, this happens if $\alpha_j=0$ or $\beta_j=0$ for any $j\in J$. If $j\in J$ and both $\alpha_j$
and $\beta_j$ are non-zero, then the last product vanishes also if $(2m+1)|\alpha_j|\not=(2n+1)|\beta_j|$.
And finally, the product is zero also if $j\not \in J$ and $(2m+1)|\alpha_j|\not=(2n+1)|\beta_j|$.

Equivalently, \eqref{eq:product} is not equal to zero if $(2m+1)|\alpha_j|=(2n+1)|\beta_j|$ for every $j\in\{1,\dots,d\}$
and $\alpha_j$ and $\beta_j$ are different from zero if $j\in J$.
If we denote $|\alpha|=(|\alpha_1|,\dots,|\alpha_n|)$ (and similarly for $|\beta|$), we will therefore
restrict ourselves for the rest of the proof
to $m,n\ge 0$ with
\begin{equation}\label{eq:alphabeta}
(2m+1)|\alpha|=(2n+1)|\beta|.
\end{equation}
If there is no pair of integers $(m,n)\in\N_0^2$, such that \eqref{eq:alphabeta} holds,
then $\langle {\mathcal C}(\alpha\cdot x),{\mathcal C}(\beta\cdot x)\rangle=0$.
Furthermore, we may consider only sets $J\subset\{1,\dots,d\}$ with an even number of elements, which are
subsets of $\supp(\alpha)=\supp(\beta).$

If \eqref{eq:alphabeta} holds, than the univariate integrals in \eqref{eq:product} are equal to one
for $j\not\in J$ and $j\in\supp(\alpha)$. They are equal to two, if $j\not\in J$ and $j\not\in\supp(\alpha)$.
And if $j\in J\subset\supp(\alpha)$, then the integral is +1 if $(2m+1)\alpha_j=(2n+1)\beta_j$ and it is equal to $-1$ if $(2m+1)\alpha_j=-(2n+1)\beta_j$.

We denote $D=\{j:\sign(\alpha_j)\cdot\sign(\beta_j)=-1\}\subset\supp(\alpha)=\supp(\beta)$,
$\nu=\#D$ and $n=\#\supp(\alpha)\ge \nu$.
Using this notation, we obtain
\begin{align*}
3\langle {\mathcal C}(\alpha\cdot x),{\mathcal C}(\beta\cdot x)\rangle&=
\sum_{\substack{m,n\ge0 \\ (2m+1)|\alpha|=(2n+1)|\beta|}}\frac{2\mu^2}{(2m+1)^2(2n+1)^2\cdot 2^{n}}
\sum_{\substack{J\subset \supp(\alpha)\\\#J\ \text{even}}}(-1)^{\#(J\cap D)}.
\end{align*}
If $\nu\ge 1$, we calculate
\begin{align*}
\sum_{\substack{J\subset \supp(\alpha)\\\#J\ \text{even}}}(-1)^{\#(J\cap D)}
&=\sum_{\substack{0\le a \le n-\nu\\ 0\le b\le \nu\\ a+b \text{\ even}}}(-1)^b \binom{n-\nu}{a}\binom{\nu}{b}\\
&=\sum_{b=0}^\nu (-1)^b\binom{\nu}{b}\sum_{\substack{0\le a \le n-\nu\\a+b \text{\ even}}}\binom{n-\nu}{a}=0, 
\end{align*}
where the last step follows since $ \sum_{b=0}^\nu (-1)^b\binom{\nu}{b}=(1-1)^{\nu}=0$. 
Hence, $\langle {\mathcal C}(\alpha\cdot x),{\mathcal C}(\beta\cdot x)\rangle=0$ if there exists
$1\le j\le d$ with $\sign(\alpha_j)\cdot \sign(\beta_j)=-1$ and we arrive at
\[
3\langle {\mathcal C}(\alpha\cdot x),{\mathcal C}(\beta\cdot x)\rangle=\sum_{\substack{m,n\ge0 \\ (2m+1)\alpha=(2n+1)\beta}}\frac{\mu^2}{(2m+1)^2(2n+1)^2}.
\]
The last sum is empty if $\alpha$ and $\beta$ are not co-linear or if we can not write $\alpha=t\beta$, where $t>0$ is a ratio
of two odd integers. Therefore, we assume that $\displaystyle \alpha=\frac{2p+1}{2q+1}\beta$ with $p,q\in\N_0$ and
that $2p+1$ and $2q+1$ are coprime integers. All pairs $(m,n)\in\N_0^2$ with $(2m+1)\alpha=(2n+1)\beta$ are then of the form
\begin{equation}\label{eq:mnl}
2m+1=(2q+1)(2l+1)\quad\text{and}\quad 2n+1=(2p+1)(2l+1),\quad l\in\N_0.
\end{equation}
This finally leads to
\[
3\langle {\mathcal C}(\alpha\cdot x),{\mathcal C}(\beta\cdot x)\rangle=\sum_{l=0}^\infty\frac{\mu^2}{(2q+1)^2(2p+1)^2(2l+1)^4},
\]
which combined with \eqref{eq:mu} gives \eqref{eq:CSinnprod}. As a byproduct, we also showed that
$\langle {\mathcal C}(\alpha\cdot x),{\mathcal C}(\beta\cdot x)\rangle=0$
if $\alpha$ and $\beta$ are not co-linear with a real factor $t$, which can be written as a ration of two odd positive integers.

Applying the same idea to the inner product of $\CalC(\alpha\cdot x)$ and $\CalS(\beta\cdot x)$,
we get a double sum over $J,K\subset\{1,\dots,d\}$
with $\#J$ even and $\#K$ odd. Therefore, it is not possible to match the univariate integrands
and their product always vanishes.
Finally, \eqref{eq:SSinnprod} follows in the same way, the only essential difference being the $(-1)^{m+n}$ factor
coming from \eqref{eq:CSFourier}. And an easy observation shows that under \eqref{eq:mnl}, the parity of $m+n$ is the same as the one of $p+q.$
\end{proof}

We complement Lemma \ref{lem:IPd} by the simple observation that the constant function is orthogonal
to all other elements of ${\mathcal R}_d$. The multivariate analogue of Theorem \ref{thm:CSd1} then reads as follows. 

\begin{thm}\label{thm:CSd}
Let $d\ge 1$. Then the system ${\mathcal R}_d$ forms a Riesz basis of $L_2([0,1]^d)$ with the Riesz constants independent of $d$.
To be more specific, the Riesz constants of the normalized system
\begin{equation}\label{eq:Rdnorm}
    \{1\}\cup\{\sqrt{3}\,{\mathcal C}(\alpha\cdot x):\alpha\righttriangleplus 0\}\cup\{\sqrt{3}\,\CalS(\alpha\cdot x):\alpha\righttriangleplus 0\}
\end{equation}
can be chosen as $A=1/2$ and $B=3/2$ independently of $d$.
\end{thm}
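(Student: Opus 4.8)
The plan is to follow the four-step architecture of the proof of Theorem~\ref{thm:CSd1}, the point being that Lemma~\ref{lem:IPd} forces the multivariate inner products to collapse onto the univariate ones. The decisive structural observation is that $\langle\CalC(\alpha\cdot x),\CalC(\beta\cdot x)\rangle$ and $\langle\CalS(\alpha\cdot x),\CalS(\beta\cdot x)\rangle$ vanish unless $\alpha$ and $\beta$ are co-linear, that $\CalC$- and $\CalS$-functions are mutually orthogonal (part~1 of Lemma~\ref{lem:IPd}), and that the constant is orthogonal to everything. Hence, for any finite subset $F\subset\mathcal R_d$, the Gram matrix $G_F$ is block diagonal: a $1\times 1$ block for the constant (entry $1$), and, for each primitive direction $\alpha_0$ (i.e.\ $\alpha_0\righttriangleplus 0$ with coprime entries) one $\CalC$-block and one $\CalS$-block indexed by the multiples $k\alpha_0$, $k\in\N$, that occur in $F$.

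First I would identify each block with a univariate one. Writing $\alpha=k\alpha_0$, $\beta=l\alpha_0$ and $g=\gcd(k,l)$, one has $\alpha=\tfrac{k/g}{l/g}\,\beta$, so comparing \eqref{eq:CSinnprod} and \eqref{eq:SSinnprod} with Lemma~\ref{lem:inner_productC} shows $3\langle\CalC(k\alpha_0\cdot x),\CalC(l\alpha_0\cdot x)\rangle=3\langle\CalC_k,\CalC_l\rangle$, and likewise for $\CalS$ once one checks that the sign $(-1)^{p+q}$ agrees with the parity of $(k+l)/(2g)$. Thus every block is a principal submatrix of a univariate Gram matrix, with $d$ playing no role whatsoever. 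I would then apply Gershgorin exactly as in Step~2 of the proof of Theorem~\ref{thm:CSd1}: the diagonal entries are $1$, while for a row indexed by $k\alpha_0$ the off-diagonal absolute row sum is $\sum_{l\neq k}3|\langle\CalC_k,\CalC_l\rangle|\le\tfrac12$ by \eqref{gersh-est} (and analogously for $\CalS$-rows, with the constant row contributing nothing). This gives $\sigma(G_F)\subset[\tfrac12,\tfrac32]$ for every finite $F$; by the reduction in Step~1 of the proof of Theorem~\ref{thm:CSd1} every finite subset of the normalized system \eqref{eq:Rdnorm} is a Riesz sequence with $A=\tfrac12$ and $B=\tfrac32$, and the limiting lemma from Step~3 then upgrades this to $\mathcal R_d$ itself.

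The remaining task, which I expect to be the main obstacle, is completeness. The right target is the real Fourier basis $\{1\}\cup\{\sqrt2\cos(2\pi\alpha\cdot x),\sqrt2\sin(2\pi\alpha\cdot x):\alpha\righttriangleplus 0\}$, which is an orthonormal basis of $L_2([0,1]^d)$, being nothing but $\{e^{2\pi i\alpha\cdot x}\}_{\alpha\in\Z^d}$ split into real and imaginary parts with $\alpha\righttriangleplus 0$ selecting one representative of each pair $\{\alpha,-\alpha\}$. To put each such function into the closed span of $\mathcal R_d$, I would substitute the scalar $t=\alpha_0\cdot x$ into the univariate decompositions \eqref{eq:decomp_c} and \eqref{eq:decomp_s}; since $\CalC_{(2l+1)k}(\alpha_0\cdot x)=\CalC((2l+1)k\alpha_0\cdot x)\in\mathcal R_d$, this writes $\cos(2\pi k(\alpha_0\cdot x))$ and $\sin(2\pi k(\alpha_0\cdot x))$ as series in elements of $\mathcal R_d$. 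The delicate point is the mode of convergence: these series have square-summable coefficients relative to the unit-norm elements of \eqref{eq:Rdnorm}, so the Riesz upper bound just established makes the partial sums Cauchy, hence convergent, in $L_2([0,1]^d)$. Letting $k$ and the primitive directions $\alpha_0$ range exhausts every frequency $\alpha\righttriangleplus 0$, so the closed span of $\mathcal R_d$ contains the whole Fourier basis and therefore equals $L_2([0,1]^d)$. The advertised dimension-independence of $A$ and $B$ is automatic, since the Gershgorin estimate never involved $d$.
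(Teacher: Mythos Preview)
Your proof is correct, and in both halves it takes a different---and arguably cleaner---route than the paper.

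For the Riesz sequence part, the paper re-runs the Euler product computation in the multivariate setting: for a fixed $\alpha$ it bounds $\sum_{\beta\righttriangleplus 0}3\langle\CalC(\alpha\cdot x),\CalC(\beta\cdot x)\rangle$ by summing over all coprime odd pairs $(2p+1,2q+1)$ and evaluating $\sum_{p,q\text{ odd coprime}}p^{-2}q^{-2}=3/2$ directly. You instead observe that the Gram matrix decomposes into blocks along primitive directions $\alpha_0$, and that each block is literally a principal submatrix of the univariate Gram matrix via the identification $3\langle\CalC(k\alpha_0\cdot x),\CalC(l\alpha_0\cdot x)\rangle=3\langle\CalC_k,\CalC_l\rangle$ (and similarly for $\CalS$). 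This lets you import \eqref{gersh-est} wholesale rather than recomputing it, and makes the dimension-independence of $A,B$ entirely transparent. Both arguments are equivalent at bottom, since the non-vanishing $\beta$'s are exactly the $l\alpha_0$, but yours avoids a second pass through the number theory.

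For completeness the difference is more substantial. The paper targets the tensor-product trigonometric basis \eqref{eq:ONB2} and spends a page converting $\prod_l c_{k_l}(x_l)\prod_m s_{k_m}(x_m)$ into sums of $\cos$ or $\sin$ of linear forms via repeated product-to-sum identities, before invoking \eqref{eq:decomp_c}--\eqref{eq:decomp_s}. You target the planar-wave basis $\{1\}\cup\{\sqrt2\cos(2\pi\alpha\cdot x),\sqrt2\sin(2\pi\alpha\cdot x):\alpha\righttriangleplus 0\}$ directly and substitute $t=\alpha_0\cdot x$ into the univariate expansions. This is shorter and more natural given how $\mathcal R_d$ is defined. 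One small point worth making explicit: you justify $L_2$-\emph{convergence} of the substituted series via the Riesz upper bound, but you should also identify the limit. The cleanest way is to note that for primitive $\alpha_0\in\Z^d$ and $1$-periodic $f$ one has $\|f(\alpha_0\cdot x)\|_{L_2([0,1]^d)}=\|f\|_{L_2([0,1])}$ (since the $e^{2\pi i n\alpha_0\cdot x}$, $n\in\Z$, are orthonormal), so $L_2([0,1])$-convergence of \eqref{eq:decomp_c} transfers immediately to $L_2([0,1]^d)$-convergence after substitution, with the correct limit.
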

\begin{proof}

\emph{Step 1.}
Using Lemma \ref{lem:IPd} together with the Gershgorin circle theorem, it is surprisingly simple
to prove Theorem \ref{thm:CSd} with slightly worse constants $A$ and $B$, cf. Remark \ref{rem:badAB}.
To improve the Riesz constants to $A=1/2$ and $B=3/2$, we proceed more carefully.
Let us denote by ${\mathbb P}$ the set of primes and by ${\mathbb P}'={\mathbb P}\setminus\{2\}$
the set of odd primes. Then every odd $p\in\N$ can be written as $p=p_1^{k_1}\cdot\ldots\cdot p_n^{k_n}$
with $p_1,\dots,p_n\in {\mathbb P}'$ and $k_1,\dots,k_n\in\N$. If $p=1$, then we choose $n=0$ and interprete the empty product as one.

We use the following observation. To a fixed $\alpha\righttriangleplus 0$ and a pair of odd coprimed integers $2p+1$ and $2q+1$,
there exists at most one $\beta\righttriangleplus 0$ such that $\alpha=(2p+1)/(2q+1)\cdot\beta$. Then we obtain
\begin{align*}
3\sum_{\beta\righttriangleplus 0}\langle\CalC(\alpha\cdot x),\CalC(\beta\cdot x)\rangle&\le
\sum_{2p+1,2q+1\ \text{coprimes}}\frac{1}{(2p+1)^2(2q+1)^2}=\sum_{p,q\ge 1\ \text{odd coprimes}}\frac{1}{p^2q^2}\\
&=\sum_{\{p_1,\dots,p_n\}\subset {\mathbb P}'}\sum_{k_1,\dots,k_n=1}^\infty \frac{1}{p_1^{2k_1}\dots p_n^{2k_n}}
\sum_{\substack{q\ge 1, q\text{ odd}\\\gcd(q,p_1\dots p_n)=1}}\frac{1}{q^2}\\
&=\sum_{\{p_1,\dots,p_n\}\subset {\mathbb P}'}\prod_{u=1}^n \left(\sum_{k_u=1}^\infty\frac{1}{p_u^{2k_u}}\right)
\prod_{p\in {\mathbb P}'\setminus\{p_1,\dots,p_n\}}\left(\sum_{k=0}^\infty\frac{1}{p^{2k}}\right)\\
&=\sum_{\{p_1,\dots,p_n\}\subset {\mathbb P}'}\prod_{u=1}^n \frac{1}{p_u^2(1-1/p_u^{2})}
\prod_{p\in {\mathbb P}'\setminus\{p_1,\dots,p_n\}}\frac{1}{1-1/p^2}\\
&=\sum_{\{p_1,\dots,p_n\}\subset {\mathbb P}'}\frac{1}{p_1^2 \dots p_n^2}\prod_{p\in {\mathbb P}'}\frac{1}{1-1/p^2}\\
&=\prod_{p\in {\mathbb P}'}\frac{1}{1-1/p^2}\prod_{p\in {\mathbb P}'}(1+1/p^2)=\frac{3}{2}, 
\end{align*}
where the last step follows from  \eqref{euler-prod}.
Using the Gershgorin circle theorem in the same way as in the proof of Theorem \ref{thm:CSd1}
then gives the bounds $1/2\le A\le B\le 3/2$, independent of $d$.

\emph{Step 2.} We show that ${\mathcal R}_d$ is also a Riesz basis. The system
\begin{equation}\label{eq:ONB1}
\{1\}\cup\{c_k(x):k\in\N\}\cup\{s_k(x):k\in\N\}
\end{equation}
with $c_k(x)=\sqrt{2}\cos(2\pi kx)$ and $s_k(x)=\sqrt{2}\sin(2\pi kx)$ is an orthonormal basis of $L_2([0,1])$.
Therefore, all  possible tensor products of the functions from \eqref{eq:ONB1} form an orthonormal basis of $L_2([0,1]^d)$.
For this system we use the following notation
\begin{equation}\label{eq:ONB2}
\left\{\prod_{l\in L}c_{k_l}(x_l)\prod_{m\in M}s_{k_m}(x_m):L,M\subset\{1,\dots,d\}\ \text{disjoint}, k_l,k_m \in\N\right\}.
\end{equation}
Now we show that every function from \eqref{eq:ONB2} can be found in the closed linear span of ${\mathcal R}_d$. This will imply 
the completeness of ${\mathcal R}_d.$ First, we again recall two simple formulas
\begin{align*}
\prod_{u=1}^n \cos(\varphi_u)&=\prod_{u=1}^n \frac{e^{i\varphi_u}+e^{-i\varphi_u}}{2}=
\frac{1}{2^n}\sum_{e\in\{-1,+1\}^n}\exp(i[e_1\varphi_1+\dots+e_n\varphi_n])\\
&=\frac{1}{2^n}\sum_{e\in\{-1,+1\}^n}\cos(e_1\varphi_1+\dots+e_n\varphi_n) 
\end{align*}
and, similarly,
\begin{align*}
\prod_{u=1}^n \sin(\varphi_u)=\frac{(-1)^{\lfloor \frac{n}{2}\rfloor}}{2^n}\cdot
\begin{cases}
\displaystyle \sum_{e\in\{-1,+1\}^n}\cos(e_1\varphi_1+\dots+e_n\varphi_n)\cdot\prod_{j=1}^n e_j\ \text{if $n$ is even},\\
\displaystyle \sum_{e\in\{-1,+1\}^n}\sin(e_1\varphi_1+\dots+e_n\varphi_n)\cdot\prod_{j=1}^n e_j\ \text{if $n$ is odd.}
\end{cases}
\end{align*}
If $\#M$ is even,  we use the elementary property $\cos(\alpha)\cos(\beta)=(\cos(\alpha+\beta)+\cos(\alpha-\beta))/2$
and obtain
\begin{align}
\prod_{l\in L}&c_{k_l}(x_l)\prod_{m\in M}s_{k_m}(x_m)=2^{(\#L+\#M)/2}\prod_{l\in L}\cos(2\pi k_lx_l)\prod_{m\in M}\sin(2\pi k_mx_m)\notag\\
&=\frac{(-1)^{\lfloor\#M/2 \rfloor}}{2^{(\#L+\#M)/2}}\sum_{e\in\{-1,+1\}^{\#L}}\cos\left(\sum_{l\in L}e_l\cdot 2\pi k_l x_l\right)
\cdot \sum_{e'\in\{-1,+1\}^{\#M}}\cos \left(\sum_{m\in M}e'_m\cdot 2\pi k_mx_m\right)\cdot \prod_{m\in M}e'_m\notag\\
&=\frac{(-1)^{\lfloor\#M/2 \rfloor}}{2^{(\#L+\#M)/2}}\sum_{e\in\{-1,+1\}^{\#L+\#M}}\cos\left(\sum_{l\in L}e_l\cdot 2\pi k_l x_l\right)
\cdot\cos \left(\sum_{m\in M}e_m\cdot 2\pi k_mx_m\right)\cdot \prod_{m\in M}e_m\notag\\
&=\frac{(-1)^{\lfloor\#M/2 \rfloor}}{2\cdot 2^{(\#L+\#M)/2}}\sum_{e\in\{-1,+1\}^{\#L+\#M}}
\cos\left(\sum_{l\in L}e_l\cdot 2\pi k_l x_l+\sum_{m\in M}e_m\cdot 2\pi k_mx_m\right)
\cdot \prod_{m\in M}e_m\notag\\
&\qquad+\frac{(-1)^{\lfloor\#M/2 \rfloor}}{2\cdot 2^{(\#L+\#M)/2}}\sum_{e\in\{-1,+1\}^{\#L+\#M}}
\cos \left(\sum_{l\in L}e_l\cdot 2\pi k_l x_l-\sum_{m\in M}e_m\cdot 2\pi k_mx_m\right)
\cdot \prod_{m\in M}e_m\notag\\
&=\frac{(-1)^{\lfloor\#M/2 \rfloor}}{2^{(\#L+\#M)/2}}\sum_{e\in\{-1,+1\}^{\#L+\#M}}
\cos\left(\sum_{u\in L\cup M}e_u\cdot 2\pi k_u x_u\right)
\cdot \prod_{m\in M}e_m\notag\\
&=\frac{(-1)^{\lfloor\#M/2 \rfloor}}{2^{(\#L+\#M)/2}}\sum_{e\in\{-1,+1\}^{\#L+\#M}}
\sum_{l=0}^\infty \frac{\mu(2l+1)}{(2l+1)^2}\frac{\sqrt{3}}{\sqrt{2}\mu}
{\mathcal C}_{{2l+1}}\left(\sum_{u\in L\cup M}e_u k_u x_u\right)
\cdot \prod_{m\in M}e_m, \label{cos-cos-1}
\end{align}
where in the last step we used the Fourier decomposition from \eqref{eq:decomp_c}. 
If $\#M$ is odd,   we use instead the formula $\cos(\alpha)\sin(\beta)=(\sin(\alpha+\beta)+\sin(\beta-\alpha))/2$, which yields 
\begin{align*}
\prod_{l\in L}&c_{k_l}(x_l)\prod_{m\in M}s_{k_m}(x_m)\\
& =\frac{(-1)^{\lfloor\#M/2 \rfloor}}{2^{(\#L+\#M)/2}}\sum_{e\in\{-1,+1\}^{\#L+\#M}}
\sum_{l=0}^\infty (-1)^l\frac{\mu(2l+1)}{(2l+1)^2}\frac{\sqrt{3}}{\sqrt{2}\mu}
{\mathcal S}_{2l+1}\left(\sum_{u\in L\cup M}e_u k_u x_u\right)
\cdot \prod_{m\in M}e_m. 
\end{align*}

We interprete these formulas as a decomposition of a basis function from \eqref{eq:ONB2}
into ${\mathcal R}_d$, which converges in $L_2([0,1]^d)$. Reasoning similarly as in the proof
of Theorem \ref{thm:CSd1}, this  finishes the argument. 
\end{proof}

\begin{rem}\label{rem:badAB}
We observe that Lemma \ref{lem:IPd} implies for fixed $\alpha\righttriangleplus 0$
\begin{align*}
3\sum_{\beta\not=\alpha}|\langle\CalC(\alpha\cdot x),\CalC(\beta\cdot x)\rangle|\le
\sum_{p,q\ge 0}\frac{1}{(2p+1)^2(2q+1)^2}-1=\frac{\pi^4}{8^2}-1=0.522....
\end{align*}
Using Gershgorin's theorem similarly as in the proof of Theorem \ref{thm:CSd1}, we could have obtained
quite easily that \eqref{eq:Rdnorm} is a Riesz sequence with constants $A=2-\pi^4/64$ and $B=\pi^4/64.$
\end{rem}

\section{Neural networks}\label{sec:NN}

In this section we finally address the question  in which classes of neural networks   we can find the 
elements of the new Riesz basis $\mathcal{R}_d$.  Therefore, we first fix some  notation  and recall
what was shown in \cite[Sect.~6]{DDFHP} (in case $d=1$). \\

A function $f:\R^{n_1}\to \R^{n_2}$ is called affine, if it can be written as $f(x)=Mx+b$, where $M\in\R^{n_2\times n_1}$
is a matrix and $b\in\R^{n_2}.$ The following definition formalizes the notion
of $\ReLU$ neural networks with width $W$ and depth $L$, cf. Figure \ref{fig:NN} and \ref{fig:Neur}.
\begin{dfn}\label{dfn:NN} Let $d,W,L$ be positive integers. Then a feed-forward $\ReLU$ network ${\mathcal N}$ with width $W$
and depth $L$ is a collection of $L+1$ affine mappings $A^{(0)},\dots,A^{(L)}$, where $A^{(0)}:\R^d\to \R^W$,
$A^{(j)}:\R^W\to\R^W$ for $j=1,\dots,L-1$ and $A^{(L)}:\R^W\to \R$. Each such a network ${\mathcal N}$ generates
a function of $d$ variables
\[
A^{(L)}\circ \ReLU\circ A^{(L-1)}\circ\cdots\circ\ReLU\circ A^{(0)}.
\]
Moreover, we denote  by $\Upsilon^{W,L}$ the set of all functions, which are generated in this way
by some feed-forward $\ReLU$ network with width $W$ and depth $L$.
\end{dfn}

\begin{figure}[h!]
\begin{center}\includegraphics[width=12cm]{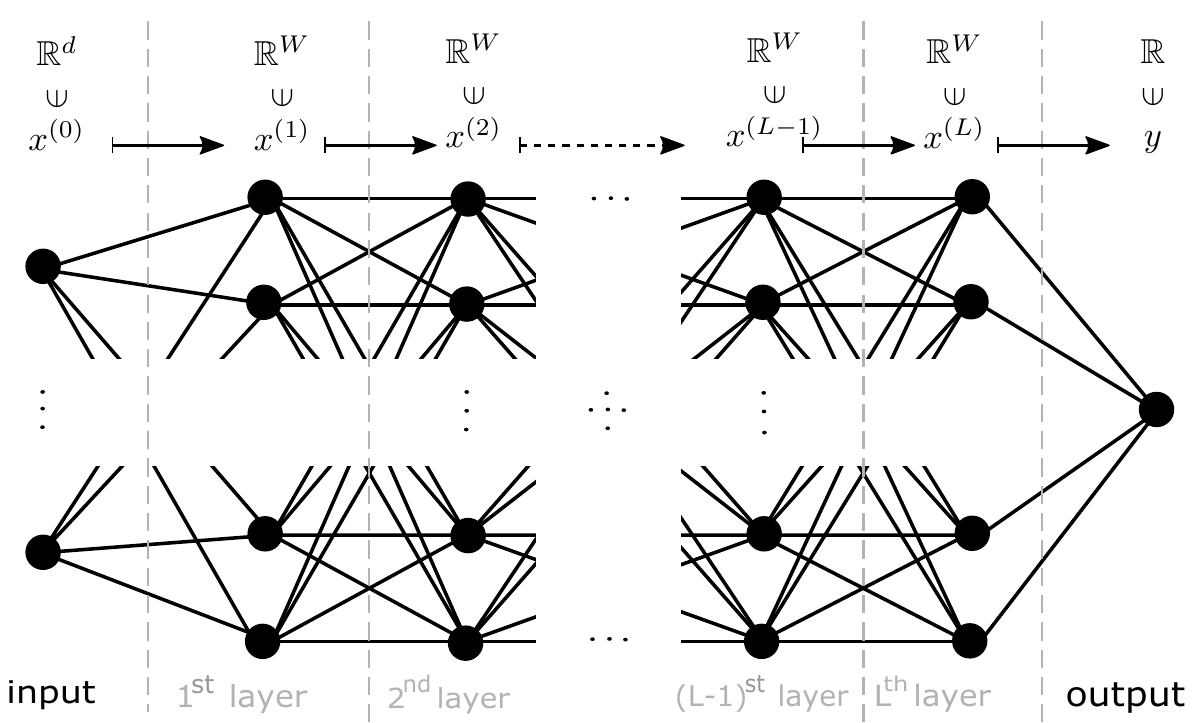}\\
\caption{Feed-forward $\mathrm{ReLU}$ network with length $L$, width $W$}
\label{fig:NN}
\end{center}
\end{figure}

\begin{figure}[h!]
\begin{center}\includegraphics[width=10cm]{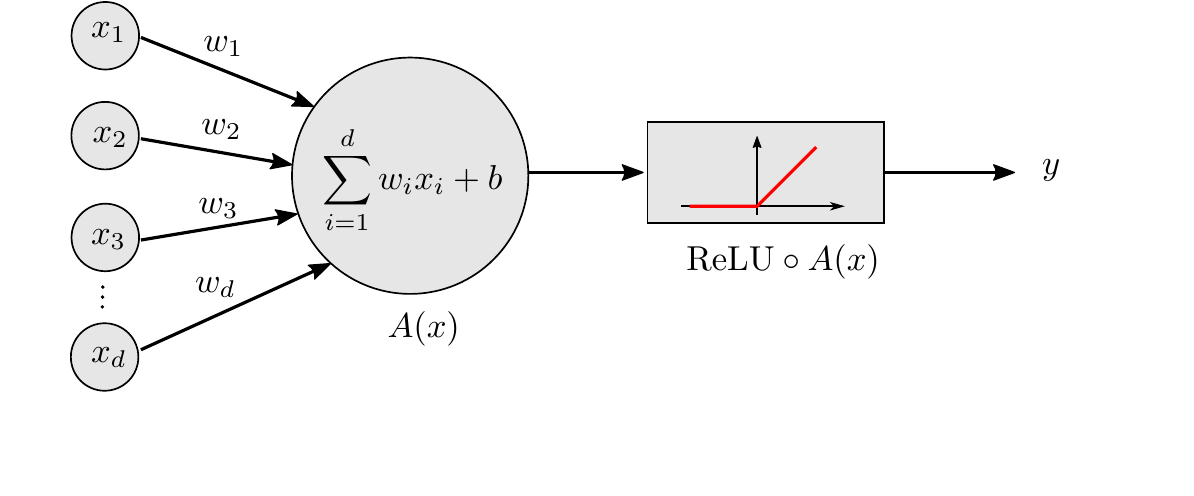}
\vskip-1cm
\caption{Close-up: $\mathrm{ReLU}$ activation function acting in neuron in 1st hidden layer}
\label{fig:Neur}
\end{center}
\end{figure}

Every $S\in \Upsilon^{W,L}$ is a continuous piecewise affine function on ${\mathbb R}^d$.
If the affine mappings associated to $S$ are denoted by $A^{(l)}$ with $l=0,\ldots, L$, then the value $S(x^{(0)})$
is computed for each input $x:=x^{(0)}\in \R^d$
after the calculation of a series of intermediate vectors $x^{(l)}:=\ReLU(A^{(l-1)}x^{(l-1)})\in \R^W$, $l=1,\dots,L$,
called vectors of activation at layer $l$. Finally the output $S(x)$ is produced as $S(x):=x^{(L+1)}=A^{(L)}x^{(L)}$. 

We collect some properties of the sets $\Upsilon^{W,L}$, which are needed in the sequel.
\begin{proposition}[]\label{prop_Y}
Let $W\geq 2$. 
\begin{itemize}
\item[(i)] Let $\mathcal{Y}_1\in \Upsilon^{W,L_1}, \ldots,  \mathcal{Y}_k\in \Upsilon^{W,L_k}$. 
Then the composition of the $\mathcal{Y}_i$ satisfies 
\[
\mathcal{Y}_k\circ \dots \circ \mathcal{Y}_1\in \Upsilon^{W,L}, \qquad L=L_1+\dots +L_k. 
\]
\item[(ii)] Let $L\ge 1$. Then $\Upsilon^{W,L}\subset\Upsilon^{W,L+1}$.
\end{itemize}
\end{proposition}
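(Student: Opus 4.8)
The plan is to prove Proposition \ref{prop_Y} directly from Definition \ref{dfn:NN}, treating the two parts separately. The key observation is that both statements are about how affine maps and $\ReLU$ layers compose, so everything reduces to bookkeeping of the collections of affine mappings.

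For part (i), the idea is that a composition of networks is again a network whose depth is the sum of the individual depths. First I would write out the two functions to be composed: if $\mathcal{Y}_1\in\Upsilon^{W,L_1}$ is generated by affine maps $A^{(0)},\dots,A^{(L_1)}$ and $\mathcal{Y}_2\in\Upsilon^{W,L_2}$ is generated by affine maps $B^{(0)},\dots,B^{(L_2)}$, then
\[
\mathcal{Y}_2\circ\mathcal{Y}_1=B^{(L_2)}\circ\ReLU\circ\cdots\circ\ReLU\circ B^{(0)}\circ A^{(L_1)}\circ\ReLU\circ\cdots\circ\ReLU\circ A^{(0)}.
\]
The point to check is that the two affine maps meeting in the middle, namely $B^{(0)}\circ A^{(L_1)}$, can be merged into a single affine map, since the composition of affine maps is affine. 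This produces a valid list of affine mappings realizing a network of depth $L_1+L_2$ and width $W$, proving the two-network case; the general case for $k$ networks then follows by an immediate induction on $k$. The only mild subtlety is to confirm the dimensions line up so that $B^{(0)}\circ A^{(L_1)}$ makes sense and maps $\R^W\to\R^W$, which is immediate from the definition since $A^{(L_1)}$ outputs in $\R$ and $B^{(0)}$ takes input in $\R^{1}$; more carefully, one arranges the merge at the $\R^W\to\R^W$ interface rather than through the scalar output.

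For part (ii), I would show $\Upsilon^{W,L}\subset\Upsilon^{W,L+1}$ by realizing that the identity on $\R^W$ can be written using one extra $\ReLU$ layer when $W\ge 2$. The standard trick is that for a scalar $t$ one has $t=\ReLU(t)-\ReLU(-t)$, so a single hidden layer of width $2$ can reproduce the identity. Concretely, given $\mathcal{Y}\in\Upsilon^{W,L}$ generated by $A^{(0)},\dots,A^{(L)}$, I would insert an extra affine map and $\ReLU$ layer just before the final affine map $A^{(L)}$ so that the inserted layer acts as the identity on the activation vector, thereby realizing the same function as an element of $\Upsilon^{W,L+1}$. The hypothesis $W\ge 2$ is exactly what is needed here, since reproducing the identity via $\ReLU(t)-\ReLU(-t)$ requires storing both $\ReLU(t)$ and $\ReLU(-t)$, i.e., at least two neurons per coordinate being reconstructed; one handles this coordinatewise within the available width.

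I expect the main obstacle to be purely notational rather than conceptual: carefully tracking the index ranges and the input/output dimensions of each affine map so that the merging step in (i) and the insertion step in (ii) are unambiguous, and verifying that the identity-reproducing layer in (ii) genuinely fits within width $W$ for all coordinates simultaneously. Neither step involves any real analytic difficulty — both are elementary consequences of the affine-composition closure and the elementary identity $t=\ReLU(t)-\ReLU(-t)$ — so the write-up will be short once the notation is fixed.
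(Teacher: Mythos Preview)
Your argument for part (i) is correct and matches the paper's approach (the paper simply cites \cite[Prop.~4.2]{DDFHP} and notes the argument carries over verbatim). The merged map $B^{(0)}\circ A^{(L_1)}$ is indeed affine from $\R^W$ to $\R^W$, since $A^{(L_1)}:\R^W\to\R$ and $B^{(0)}:\R\to\R^W$; your ``more carefully'' caveat is unnecessary.

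Part (ii), however, has a genuine gap. You propose to insert an identity-reproducing layer on the activation vector in $\R^W$ using the trick $t=\ReLU(t)-\ReLU(-t)$, but that trick costs two neurons \emph{per coordinate}, so reproducing the identity on all of $\R^W$ this way requires width $2W$, not $W$. The sentence ``one handles this coordinatewise within the available width'' does not go through. The paper avoids this by working at the \emph{scalar} output rather than the $\R^W$ hidden layer: it observes that the one-dimensional identity $\mathrm{id}:\R\to\R$ lies in $\Upsilon^{2,1}\subset\Upsilon^{W,1}$ via
\[
\mathrm{id}(x)=\begin{bmatrix}1 & -1\end{bmatrix}\ReLU\left(\begin{bmatrix}1\\-1\end{bmatrix}x\right),
\]
and then invokes part (i) to conclude that $\mathrm{id}\circ\mathcal{Y}\in\Upsilon^{W,L+1}$. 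This is where the hypothesis $W\ge 2$ is actually used. An alternative repair of your own approach: since the activation vector just before $A^{(L)}$ is the output of a $\ReLU$ and hence has nonnegative entries, inserting the plain identity matrix followed by $\ReLU$ there already acts as the identity, with no width inflation and no need for the $\pm$ trick at all.
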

\begin{proof}
The proof of (i) can be found in \cite[Prop.~4.2]{DDFHP} for $d=1$.
The proof for general $d\ge 1$ follows virtually without any change.

For the proof of (ii), we consider the identity function ${\rm id}(x)=x$ for $x\in\R$. We rewrite it as
\begin{equation}\label{eq:idNN}
{\rm id}(x)=\begin{bmatrix}1 & -1\end{bmatrix}\ReLU\left\{\begin{bmatrix}1\\-1\end{bmatrix}x\right\}
\end{equation}
to conclude that ${\rm id}\in\Upsilon^{2,1}$. An easy modification of \eqref{eq:idNN} also shows that
${\rm id}\in\Upsilon^{W,1}$ for every $W\ge 2$. The result then follows by (i).
\end{proof}

An important example and building block for our constructions to follow is the \emph{hat function} $H:[0,1]\rightarrow \R$
\[
H(x):= \begin{cases}
2x,& 0\leq x\leq \frac 12,\\
2(1-x), & \frac 12<x\leq 1, 
\end{cases}
\]
which was already used in connection with
feed-forward neural networks with   $\ReLU$ activation function   by \cite{Telg},
cf. also \cite{DDFHP,Schm,Yarot}.
From the representation
\[
H(x)=\begin{bmatrix}2 & -4 \end{bmatrix}\mathrm{ReLU}\left\{
\begin{bmatrix} 
1\\1
\end{bmatrix}x+\begin{bmatrix}
0\\ -\frac 12
\end{bmatrix}
\right\},\quad x\in[0,1]
\]
we see, that $H$ belongs to $\Upsilon^{2,1}$, see Figure \ref{fig:hat-funct}.
\begin{figure}[h!]
\begin{center}
\includegraphics[width=12cm]{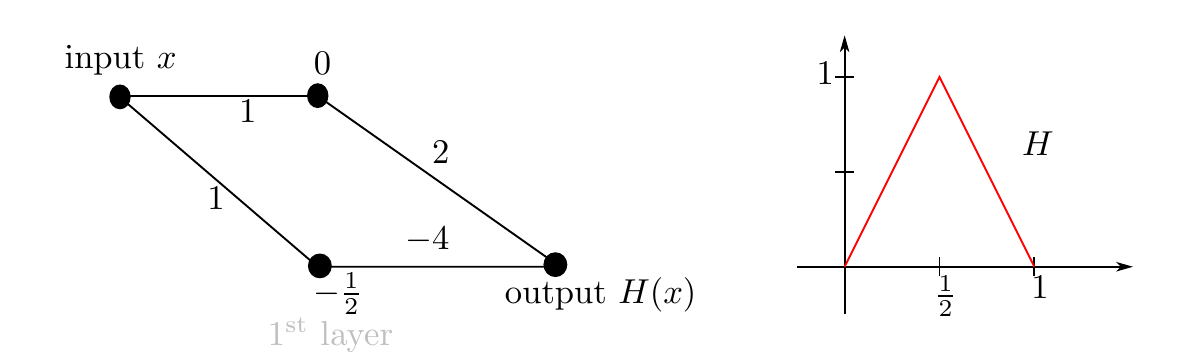}
\caption{Computational graph showing $H\in \Upsilon^{2,1}$ and usual graph associated with $H$}
\label{fig:hat-funct}
\end{center}
\end{figure}\\

Furthermore, since $H\in \Upsilon^{2,1}$, we deduce from  Proposition \ref{prop_Y}(i)    that the $k$-fold composition $H^{\circ k}:=H
\circ H\circ \dots \circ H$ belongs to $\Upsilon^{2,k}$,  cf. Figure \ref{fig:hat-funct2}. 

\begin{figure}[h!]
\begin{center}
\includegraphics[width=15cm]{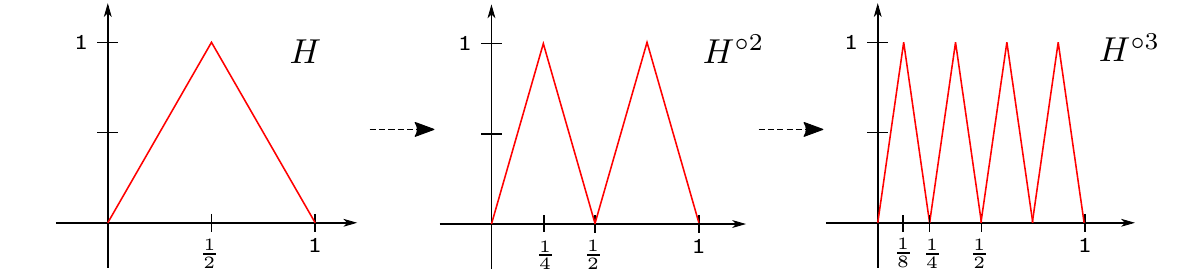}
\caption{The  graphs of  $H$, $H^{\circ 2}$, and $H^{\circ 3}$}
\label{fig:hat-funct2}
\end{center}
\end{figure}

In particular, $H^{\circ m}$ is a sawtooth function taking alternatively the values $0$ and $1$ at its breakpoints $l2^{-m}$,
$l=0,1,\ldots, 2^m$, cf. \cite[Lemma III.1]{EPGB} or \cite[Lemma 2.4]{Telg}. Moreover,  since  the restriction of the function
$(2^mx-\lfloor 2^m x\rfloor)$ on each interval $[l2^{-m}, (l+1)2^{-m})$ is a linear function passing through $l2^{-m}$ with
slope $2^m$ and $\CalC(0)=\CalC(1)$,  we have the coincidence 
\[
\CalC_{2^m}(x)=\CalC(2^mx-\lfloor 2^m x\rfloor)=\CalC(H^{\circ m}(x)), \quad x\in [0,1],
\]
 cf.  \cite[Page 147]{DDFHP}. 
Therefore, since $H$ and $\CalC=1-2H$ belong to $\Upsilon^{2,1}$,  we obtain from Proposition \ref{prop_Y}(i) that $\CalC_{2^m}\in \Upsilon^{2,m+1}$ for $m=\N_0$. Concerning  $\CalS$,   we deduce $\CalS\in \Upsilon^{2,2}$ from the identity $\CalS(x)=\CalC_2(\frac x2+\frac 38)$, $x\in [0,1]$,   which ultimately yields $\CalS_{2^m}\in \Upsilon^{2,m+2}$.
Moreover, for arbitrary $j\in \N$, we choose the smallest $m\in \N$ such that $j\leq 2^m$ and in view of $\CalC_j(x)=\CalC_{2^m}(j2^{-m}x)$, $j\leq 2^m$, see that the following holds.
\begin{lemma}\label{daub-lem-1}
Let $j\in \N$. Then,  restricted to $[0,1]$,
\[
\CalC_j\in \Upsilon^{2,\lceil \log_2 j\rceil+1}\quad \text{and}\quad 
\CalS_j\in \Upsilon^{2,\lceil \log_2 j\rceil+2}
\]
and all the entries of weight matrices and the bias vectors are bounded by 8.
\end{lemma}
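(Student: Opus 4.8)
The plan is to reduce everything to the dyadic frequencies $2^m$, for which membership in $\Upsilon^{2,\cdot}$ has already been established above, and then to pass to an arbitrary $j$ by a harmless rescaling of the input. Concretely, given $j\in\N$ I would set $m=\lceil\log_2 j\rceil$, the smallest nonnegative integer with $2^m\ge j$, so that $j2^{-m}\in(0,1]$. The starting point is the already proved fact that $\CalC_{2^m}\in\Upsilon^{2,m+1}$ and $\CalS_{2^m}\in\Upsilon^{2,m+2}$, combined with the elementary scaling identities $\CalC_j(x)=\CalC_{2^m}(j2^{-m}x)$ and $\CalS_j(x)=\CalS_{2^m}(j2^{-m}x)$. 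These hold for every $x\in[0,1]$ precisely because $j2^{-m}x\in[0,1]$, which keeps the argument inside the range where the dyadic representations $\CalC_{2^m}(y)=\CalC(H^{\circ m}(y))$ are valid.

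The depth bookkeeping is the easy part. The map $x\mapsto j2^{-m}x$ is affine from $\R$ to $\R$, so pre-composing the network realizing $\CalC_{2^m}$ with it merely replaces the input affine map $A^{(0)}$ by $A^{(0)}\circ(j2^{-m}\,\cdot)$, which is again an affine map $\R\to\R^2$. No $\ReLU$ layer is added, so the depth is unchanged and $\CalC_j\in\Upsilon^{2,m+1}=\Upsilon^{2,\lceil\log_2 j\rceil+1}$. The very same reasoning applied to $\CalS_{2^m}$ yields $\CalS_j\in\Upsilon^{2,m+2}=\Upsilon^{2,\lceil\log_2 j\rceil+2}$. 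Here one only uses that $\Upsilon^{W,L}$ is closed under pre-composition with an affine map, together with Proposition \ref{prop_Y}(i), which was already invoked to assemble $\CalC_{2^m}$ and $\CalS_{2^m}$ as compositions of $H^{\circ m}$ with (shifted) copies of $\CalC$.

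The genuinely delicate point, and the one I would spend most care on, is the uniform bound $8$ on all weights and biases. I would track the constants layer by layer through the explicit construction. The hidden blocks originate from $H\in\Upsilon^{2,1}$, whose matrices $(1,1)^T$, $(2,-4)$ and bias $-1/2$ all have entries of absolute value at most $4$; composing two copies of $H$ via Proposition \ref{prop_Y}(i) merges the output $(2,-4)$ of one block with the input $(1,1)^T$ of the next into a matrix with entries in $\{\pm2,\pm4\}$, so every interior weight matrix of $H^{\circ m}$ stays bounded by $4$. The only place the bound grows is the final affine map, where the relation $\CalC=1-2H$ turns the output layer $(2,-4)$ into $(-4,8)$ with bias $1$; this is exactly where the constant $8$ arises. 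Finally, absorbing the rescaling $j2^{-m}\le1$ (and, for $\CalS$, the fixed shift coming from $\CalS(x)=\CalC_2(x/2+3/8)$) into the input layer only multiplies the first-layer weights by a factor at most $1$ and shifts the bias by a fixed small constant, so nothing ever exceeds $8$.

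The main obstacle is therefore not conceptual but the careful verification that the composition rule of Proposition \ref{prop_Y}(i) never inflates an entry past $8$: one must check that every merged weight matrix along the chain is a product of the fixed small matrices above, and that the single amplification by the factor $2$ from $\CalC=1-2H$ occurs only once, in the output layer. A secondary point to treat with care is that the identities $\CalC_{2^m}(x)=\CalC(H^{\circ m}(x))$ and their rescaled versions hold pointwise on all of $[0,1]$, including the dyadic breakpoints; this is why the symmetry of $\CalC$ about $1/2$ and the convention $\CalC(0)=\CalC(1)$ were recorded beforehand.
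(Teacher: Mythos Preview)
Your proposal is correct and follows essentially the same approach as the paper: reduce to the dyadic case $\CalC_{2^m}\in\Upsilon^{2,m+1}$, $\CalS_{2^m}\in\Upsilon^{2,m+2}$ already established in the text, and then pre-compose with the affine rescaling $x\mapsto j2^{-m}x$ for $m=\lceil\log_2 j\rceil$, which keeps the depth unchanged. The paper actually relegates the weight bound to the reference \cite{DDFHP}, so your layer-by-layer tracking of the constant $8$ (identifying the output layer $(-4,8)$ of $\CalC=1-2H$ as the place where the maximum is attained) is more explicit than what the paper itself provides.
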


\medskip 

This was already observed in the proof of \cite[Theorem~6.2]{DDFHP}.  We now provide a  multivariate version  of Lemma   \ref{daub-lem-1}.

\begin{lemma}\label{lem-mult-1}
Let $d>1$ and $\alpha\in \mathbb{Z}^d\setminus\{0\}$. Then,  restricted to $x\in[0,1]^d$, 
\[
\CalC(\alpha \cdot x)\in \Upsilon^{2,\lceil \log_2 \|\alpha\|_1\rceil+2}\quad \text{and}\quad 
\CalS(\alpha \cdot x)\in \Upsilon^{2,\lceil \log_2 \|\alpha\|_1\rceil+3},
\]
where $\|\alpha\|_1=|\alpha_1|+\ldots + |\alpha_d|$. 
Also in this case, the weights and biases are bounded by 8.
\end{lemma}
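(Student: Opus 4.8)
The plan is to reduce the multivariate functions $\CalC(\alpha\cdot x)$ and $\CalS(\alpha\cdot x)$ to the univariate functions $\CalC_N$ and $\CalS_N$ with $N=\|\alpha\|_1$, for which Lemma \ref{daub-lem-1} is already available, and then to prepend a single affine reduction layer by means of Proposition \ref{prop_Y}(i). Concretely, I would set $N=\|\alpha\|_1$ and $m_0=\sum_{j:\alpha_j<0}\alpha_j\in\Z$, and introduce the scalar affine function $y=y(x):=(\alpha\cdot x-m_0)/N$. Since for $x\in[0,1]^d$ the quantity $\alpha\cdot x$ ranges exactly over $[m_0,m_0+N]$ (its minimum being $\sum_{\alpha_j<0}\alpha_j=m_0$ and its maximum $\sum_{\alpha_j>0}\alpha_j=m_0+N$), we have $y\in[0,1]$. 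Because $m_0$ is an integer and both $\CalC$ and $\CalS$ are $1$-periodic,
\[
\CalC(\alpha\cdot x)=\CalC(\alpha\cdot x-m_0)=\CalC(Ny)=\CalC_N(y),\qquad \CalS(\alpha\cdot x)=\CalS_N(y),
\]
so it suffices to realize the $[0,1]$-restrictions of $\CalC_N$ and $\CalS_N$ precomposed with the map $x\mapsto y(x)$.

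Next I would realize $x\mapsto y(x)$ as a width-$2$, depth-$1$ network. Because $y(x)\ge 0$ on $[0,1]^d$, one has $y(x)=\ReLU(y(x))$, so a single $\ReLU$ layer (with a dummy second neuron to meet the width-$2$ requirement and a selection map at the end) outputs $y(x)$ exactly; hence $x\mapsto y(x)$ lies in $\Upsilon^{2,1}$. By Lemma \ref{daub-lem-1} we have $\CalC_N\in\Upsilon^{2,\lceil\log_2 N\rceil+1}$ and $\CalS_N\in\Upsilon^{2,\lceil\log_2 N\rceil+2}$ on $[0,1]$, so composing the reduction network with these via Proposition \ref{prop_Y}(i) gives
\[
\CalC(\alpha\cdot x)\in\Upsilon^{2,\lceil\log_2 N\rceil+2},\qquad \CalS(\alpha\cdot x)\in\Upsilon^{2,\lceil\log_2 N\rceil+3},
\]
which is exactly the claim with $N=\|\alpha\|_1$. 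I would remark that the single extra layer compared with the univariate Lemma \ref{daub-lem-1} is precisely the price of the affine reduction $x\mapsto y(x)$, which is absent when $d=1$.

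Finally I would verify the bound $8$ on all weights and biases. In the reduction layer the linear coefficients are $\alpha_j/N$ and the bias is $-m_0/N$; since $|\alpha_j|\le N$ and $|m_0|\le N$, all of these are bounded by $1$, while the univariate networks for $\CalC_N$ and $\CalS_N$ have weights and biases bounded by $8$ by Lemma \ref{daub-lem-1}. The only spot requiring genuine care is the junction affine map produced when Proposition \ref{prop_Y}(i) merges the final (selection) map of the reduction network with the first (scaled-hat) map of the $\CalC_N$/$\CalS_N$ network; as both carry entries of absolute value at most $1$ there, the merged map again has entries bounded by $1$, so the global bound $8$ survives. I expect the main obstacle to be not any single deep step but this bookkeeping at the junction, together with the verification that the integer shift by $m_0$ and the rescaling by $N$ really keep the argument $y$ inside $[0,1]$, the region where the sawtooth representation of $\CalC_{2^m}$ underlying Lemma \ref{daub-lem-1} is valid.
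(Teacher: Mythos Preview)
Your argument is correct and reaches the stated depth and weight bounds, but it takes a genuinely different route from the paper. The paper does \emph{not} introduce a separate reduction layer. Instead it extends the univariate building block to the interval $[-1,1]$ by setting $\tilde{\CalC}_{2^m}(t)=\CalC_{2^{m+1}}((t+1)/2)$, which lies in $\Upsilon^{2,m+2}$ (one more layer than $\CalC_{2^m}$ because the frequency is doubled). Then, with $m=\lceil\log_2\|\alpha\|_1\rceil$, one has $2^{-m}\alpha\cdot x\in[-1,1]$ for $x\in[0,1]^d$, and the affine map $x\mapsto 2^{-m}\alpha\cdot x$ is absorbed into the very first affine layer $A^{(0)}$ of the network for $\tilde{\CalC}_{2^m}$ at no extra depth cost. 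So both arguments pay exactly one extra layer over Lemma~\ref{daub-lem-1}, just in different places: the paper spends it on passing from $\CalC_{2^m}$ to $\CalC_{2^{m+1}}$ in order to accommodate inputs in $[-1,1]$, while you spend it on a standalone $\Upsilon^{2,1}$ network realizing the integer shift and rescaling $x\mapsto y(x)$. Your approach has the virtue of invoking Lemma~\ref{daub-lem-1} as a black box and of making the weight bookkeeping at the junction explicit (the paper leaves this implicit); the paper's approach is a bit more uniform in that it never needs to identify the negative coordinates of $\alpha$ or to argue that the pre-$\ReLU$ quantity is nonnegative. Incidentally, since $A^{(0)}$ may take input in $\R^d$, you could equally well absorb your affine map $x\mapsto y(x)$ directly into the first layer of the $\CalC_N$ network and drop the separate reduction layer, which would in fact shave one layer off the stated bound.
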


\begin{proof} We first extend the functions $\CalC_{2^m}$, $\CalS_{2^m}$ from $[0,1]$ to the interval $[-1,1]$ by putting
$$\tilde{\CalC}_{2^m}(x):=\CalC_{2^{m+1}}\left(\frac{x+1}{2}\right), \quad x\in [-1,1],$$ 
and deduce that $\tilde{\CalC}_{2^m}\in \Upsilon^{2,m+2}$. 
Let  now $x\in [0,1]^d$, then from 
$
\alpha \cdot x= \alpha_1x_1+\ldots + \alpha_d x_d 
$  we get 
\[
\alpha \cdot x\in \big[-\|\alpha\|_1, \|\alpha\|_1\big]. 
\]
Moreover, using the fact that  $\CalC(\alpha \cdot x)=\tilde{\CalC}_{2^m}(2^{-m}\alpha \cdot x)$, we choose $m:=\lceil \log_2\|\alpha\|_1\rceil$ and obtain 
\[
2^{-m}\alpha \cdot x\in \left[-\frac{\|\alpha\|_1}{2^m},\frac{\|\alpha\|_1}{2^m}\right]\subset[-1,1], 
\]
which implies $\CalC(\alpha \cdot x)\in \Upsilon^{2,\lceil \log_2\|\alpha\|_1\rceil+2}$.
The result for   $\CalS(\alpha \cdot x )$ follows by similar considerations. 
\end{proof}
\begin{rem}
Let us point out that we only have an implicit dependence of the length $L$ of our approximating 
neural network on the dimension $d$ (which is displayed by the fact that it logarithmically depends on the $\ell_1$-norm of $\alpha$).
\end{rem}

Finally, using Lemma \ref{lem-mult-1} we obtain a multivariate analogue of  \cite[Theorem~6.2]{DDFHP}, where it was shown  that one can reproduce linear combinations of $\CalC_k$ and $\CalS_k$ via $\mathrm{ReLU}$ networks with a good control of the depth $L$. 

\begin{thm} Let $d\ge 1$ and let $k,l\ge 0$ be integers with $k+l\ge 1$. Let $\{\alpha_1,\dots,\alpha_k,\beta_1,\dots,\beta_l\}\subset\Z^d\setminus\{0\}$.
Then the function
\[
f(x)=\sum_{i=1}^{k} a_i \CalC(\alpha_i\cdot x)+\sum_{j=1}^l b_j\CalS(\beta_j\cdot x),\quad x\in[0,1]^d
\]
belongs to $\Upsilon^{W,L}$ with
\[
W=2(k+l)\quad\text{and}\quad
L=\max_{\substack{i=1,\ldots, k;\\ j=1,\ldots, l\ }}\{\lceil\log_2(\|\alpha_i\|_1) \rceil+2, \lceil\log_2(\|\beta_j\|_1) \rceil+3\} 
\]
and the weights and biases in this network are bounded by 
$\max_{\substack{i=1,\ldots, k;\\  j=1,\ldots, l}}\{8|a_{i}|, 8|b_{j}|, 8\}$. 
\end{thm}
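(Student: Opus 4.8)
The plan is to build the network for $f$ by running the $k+l$ individual networks for the summands $\CalC(\alpha_i\cdot x)$ and $\CalS(\beta_j\cdot x)$ \emph{in parallel}, and then taking the appropriate linear combination of their outputs in the final affine layer. Lemma \ref{lem-mult-1} already tells us that each $\CalC(\alpha_i\cdot x)$ is realizable with width $2$ and depth $\lceil\log_2\|\alpha_i\|_1\rceil+2$, and each $\CalS(\beta_j\cdot x)$ with width $2$ and depth $\lceil\log_2\|\beta_j\|_1\rceil+3$, with all weights and biases bounded by $8$. First I would observe that a parallel composition of $k+l$ networks of width $2$ naturally fits into a single network of width $W=2(k+l)$: the weight matrices of the combined network are block-diagonal, with the $i$-th block being the corresponding layer of the $i$-th subnetwork.

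The main technical point is that the subnetworks have \emph{different depths}, so before stacking them in parallel I must pad the shallower ones up to the common depth $L=\max\{\lceil\log_2\|\alpha_i\|_1\rceil+2,\ \lceil\log_2\|\beta_j\|_1\rceil+3\}$. This is exactly what Proposition \ref{prop_Y}(ii) provides: since $\Upsilon^{2,L'}\subset\Upsilon^{2,L}$ for every $L'\le L$, each summand individually lies in $\Upsilon^{2,L}$. Concretely, the padding is done by inserting copies of the identity realization \eqref{eq:idNN}, which costs no extra width (it also uses width $2$) and whose weights $\pm1$ are within the bound $8$. After this step all $k+l$ subnetworks have a common depth $L$ and common width $2$.

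Once the depths agree, I would stack them: the affine map $A^{(\ell)}$ of the big network is the block-diagonal matrix whose blocks are the $\ell$-th affine maps of the individual (padded) subnetworks, giving total width $W=2(k+l)$. The only layer that is \emph{not} block-diagonal is the last one, $A^{(L)}:\R^{W}\to\R$, which reads off the scalar outputs of the individual subnetworks and forms the combination $\sum_i a_i\CalC(\alpha_i\cdot x)+\sum_j b_j\CalS(\beta_j\cdot x)$. This is where the coefficients $a_i,b_j$ enter the weights, which is why the final bound on weights and biases becomes $\max_{i,j}\{8|a_i|,8|b_j|,8\}$ rather than simply $8$: multiplying the last-layer weights (bounded by $8$ from Lemma \ref{lem-mult-1}) by $a_i$ or $b_j$ scales them by at most $|a_i|$ or $|b_j|$, while all interior weights and biases retain their bound of $8$.

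The step I expect to require the most care is the bookkeeping of the depth padding together with the claim that parallelization preserves the width bound $W=2(k+l)$ exactly. One must check that inserting the identity layers does not inadvertently inflate the width and that the block-diagonal assembly is compatible with Definition \ref{dfn:NN} (in particular that the first affine map $A^{(0)}:\R^d\to\R^W$ simply stacks the $k+l$ input maps, each $\R^d\to\R^2$). None of this involves a genuine difficulty beyond careful matrix block-structure, so the result follows once the parallel-composition and depth-synchronization construction is spelled out.
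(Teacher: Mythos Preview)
Your proposal is correct and follows essentially the same route as the paper: invoke Lemma~\ref{lem-mult-1} for the individual summands, use Proposition~\ref{prop_Y}(ii) to synchronize all subnetworks to the common depth $L$, stack them in parallel (block-diagonally) to obtain width $W=2(k+l)$, and absorb the coefficients $a_i,b_j$ into the final affine layer, which accounts for the weight bound $\max\{8|a_i|,8|b_j|,8\}$.
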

\begin{proof}
By Proposition \ref{prop_Y} (ii) and  Lemma \ref{lem-mult-1}, $\CalC(\alpha_i\cdot x)\in \Upsilon^{2,L}$ and
$\CalS(\beta_j\cdot x)\in\Upsilon^{2,L}$ for every $i=1,\dots,k$ and $j=1,\dots,l$ if we restrict $x$ to $[0,1]^d$.
By Definition \ref{dfn:NN} we have the corresponding representations for $x\in[0,1]^d$ and all admissible $i$'s and $j$'s
\begin{align}
\label{eq:NN_p1} \CalC(\alpha_i\cdot x)&=A_i^{(L)}\circ \ReLU\circ A_i^{(L-1)}\circ\cdots\circ\ReLU\circ A_i^{(0)}(x)\\[-.5cm]
\intertext{and}\notag\\[-1.1cm]
\label{eq:NN_p2}\CalS(\beta_j\cdot x)&=B_j^{(L)}\circ \ReLU\circ B_j^{(L-1)}\circ\cdots\circ\ReLU\circ B_j^{(0)}(x).
\end{align}
If $z=(z_1,\dots,z_n)$ is a vector in $\R^n$ and $1\le u<v\le n$ are integers, then we denote by
$z_{u,v}=(z_u,z_v)$ the restriction of $z$ to the set $\{u,v\}$. Furthermore, we denote
$x^{(0)}=(x_1,\dots,x_d)$ and stack the networks \eqref{eq:NN_p1} and \eqref{eq:NN_p2} on top of each other. In this way, we
obtain a series of intermediate vectors $x^{(1)},\dots,x^{(L)}\in\R^{W}$
\begin{align*}
x^{(1)}&=\ReLU\begin{bmatrix} A_1^{(0)}x^{(0)}&\dots&A^{(0)}_kx^{(0)}&B_1^{(0)}x^{(0)}&\dots&B^{(0)}_lx^{(0)}\end{bmatrix}^T,\\
x^{(2)}&=\ReLU\begin{bmatrix} A_1^{(1)}x_{1,2}^{(1)}&\dots&A^{(1)}_kx_{2k-1,2k}^{(1)}
&B_1^{(1)}x_{2k+1,2k+2}^{(1)}&\dots&B^{(1)}_lx_{2k+2l-1,2k+2l}^{(1)}\end{bmatrix}^T,\\
&\vdots\\
x^{(L)}&=\ReLU\begin{bmatrix} A_1^{(L-1)}x_{1,2}^{(L-1)}&\dots&A^{(L-1)}_kx_{2k-1,2k}^{(L-1)}
&B_1^{(L-1)}x_{2k+1,2k+2}^{(L-1)}&\dots&B^{(L-1)}_lx_{2k+2l-1,2k+2l}^{(L-1)}\end{bmatrix}^T.
\end{align*}
Finally, the result follows by observing that
\[
y=f(x)=\sum_{i=1}^k a_i A_i^{(L)}x^{(L)}_{2i-1,2i}+\sum_{j=1}^l b_j B_j^{(L)}x^{(L)}_{2j-1+2k,2j+2k},
\]
see Figure \ref{fig:thm-4-5}.\vskip-.7cm
\begin{figure}[h!]
\begin{center}
\includegraphics[width=15cm]{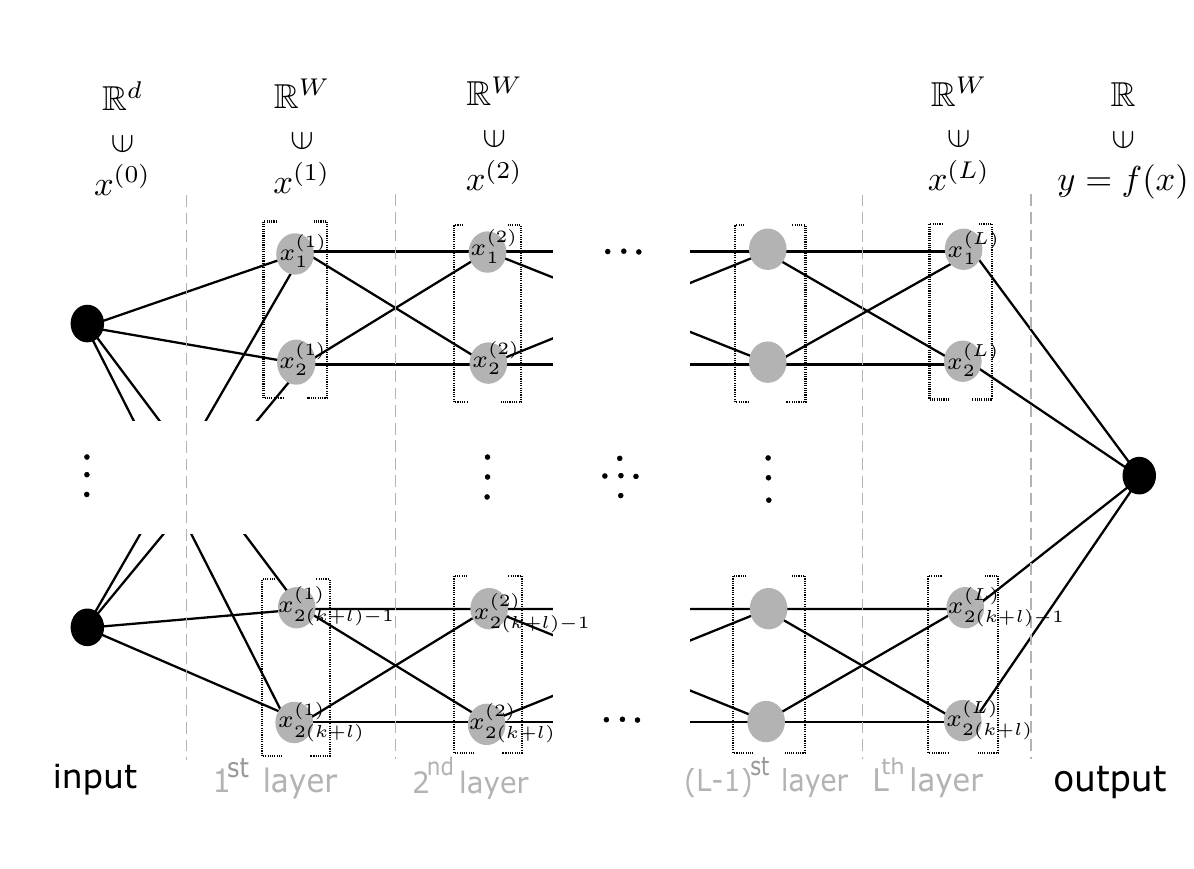}\vskip-0.8cm
\caption{Feed-forward ReLU network producing $y=f(x)$}
\label{fig:thm-4-5}
\end{center}
\end{figure}
Since by Lemma \ref{lem-mult-1}, all the weights and biases used in the calculation of $x^{(1)},\dots,x^{(L)}$
are bounded by 8, the weights in the last step are then bounded by $\max_{i=1,\dots,k}8|a_i|$ and
$\max_{j=1,\dots,l}8|b_j|$, respectively.
\end{proof}

{\bf Acknowledgment:} We would like to thank the authors of \cite{DDFHP} for their kind permission to re-use some of their
figures. We also thank Dorothee Haroske (FSU Jena, Germany) for her hospitality during our stay in Jena, where part of the work
took place.

\end{document}